\documentclass[reprint,aps,pra,superscriptaddress,amsmath,amssymb,floatfix,nofootinbib]{revtex4-2}
\usepackage[T1]{fontenc}
\usepackage{microtype}
\ifdefined\pdfminorversion\pdfminorversion=7\fi
\usepackage{xcolor}
\usepackage{enumerate}

\usepackage{graphicx,bm,booktabs}
\usepackage[caption=false,font=footnotesize,justification=raggedright,singlelinecheck=false]{subfig} 
\usepackage{tikz}
\usetikzlibrary{arrows.meta,positioning}
\usepackage[colorlinks,allcolors=blue!60!black,pdfusetitle]{hyperref}
\newcommand{\ket}[1]{\lvert#1\rangle}
\newcommand{\bra}[1]{\langle#1\rvert}
\newcommand{\logicalX}{\overline{X}}
\newcommand{\logicalZ}{\overline{Z}}
\newcommand{\logicalPsi}{\overline{\Psi}}
\newcommand{\cC}{\mathcal C}
\newcommand{\cX}{\mathcal X}
\newcommand{\cE}{\mathcal E}
\newcommand{\cN}{\mathcal N}
\newcommand{\cI}{\mathcal I}
\newcommand{\HH}{\mathsf H}
\newcommand{\CZ}{\mathsf{CZ}}
\newcommand{\GHZ}[1]{\ket{\mathrm{GHZ}_{#1}}}
\newcommand{\PH}{\mathsf{PH}}
\newcommand{\BPP}{\mathsf{BPP}}
\newcommand{\BQP}{\mathsf{BQP}}
\newcommand{\PP}{\mathsf{PP}}
\newcommand{\NP}{\mathsf{NP}}
\newcommand{\post}{\mathsf{post}}
\newcommand{\poly}{\operatorname{poly}}
\newcommand{\Vrotated}[1]{\mathcal{V}^{\mathrm{rot}}_{#1}}
\newcommand{\Erotated}[1]{\mathcal{E}^{\mathrm{rot}}_{#1}}
\newcommand{\Vstandard}[1]{\mathcal{V}^{\mathrm{std}}_{#1}}
\newcommand{\Erotatedstar}[1]{\mathcal{E}^{\mathrm{rot},*}_{#1}}
\newcommand{\Caux}[1]{\mathsf{C}^{\mathrm{std},(#1)}_{D}}
\newcommand{\Crotated}[1]{\mathsf{C}^{\mathrm{rot}}_{#1}}
\newcommand{\Cstandard}[1]{\mathsf{C}^{\mathrm{std}}_{#1}}
\newcommand{\Sample}{\mathsf{Sample}}
\newtheorem{theorem}{Theorem}
\newtheorem{lemma}{Lemma}
\newtheorem{appendixlemma}{Lemma}[section]
\tikzset{site/.style={circle,draw,fill=white,minimum size=6mm,inner sep=1pt},
 check/.style={rectangle,draw,fill=gray!12,minimum size=5mm,inner sep=2pt}}
\begin{document}
\title{Classical simulation of coherent crosstalk in surface codes}
\author{Andrew S. Darmawan}
\affiliation{Independent Researcher, Kyoto, Japan}
\author{Yelyzaveta Kolesnyk}
\affiliation{Quantum Research Center, Technology Innovation Institute (TII), Abu Dhabi, United Arab Emirates}
\affiliation{Department of Mathematics, Technical University of Munich, 85748 Garching, Germany}
\affiliation{Munich Center for Quantum Science and Technology (MCQST), 80799 M\"unchen, Germany}
\author{Robert K\"onig}
\affiliation{Department of Mathematics, School of Computation, Information and Technology, Technical University of Munich, 85748 Garching, Germany}
\affiliation{Munich Center for Quantum Science and Technology (MCQST), 80799 M\"unchen, Germany}
\date{September 30, 2026}
\begin{abstract}
We give a polynomial-time classical algorithm which samples from the syndrome distribution of a surface code state corrupted by coherent nearest-neighbor $ZZ$~crosstalk. Our algorithm complements the known efficient simulation algorithms for coherent single-qubit errors in the surface code. When both single-qubit coherent noise and coherent crosstalk are present, we find a complexity-theoretic obstruction to efficient simulation: unless the polynomial hierarchy collapses, there is no efficient classical algorithm for sampling from the syndrome distribution, even up to a constant multiplicative error. This is obtained by connecting the syndrome distribution under coherent noise to the output distribution of certain IQP circuits associated with a non-planar graph.
\end{abstract}
\maketitle

\section{Introduction}
\label{sec:intro}

Surface codes~\cite{Kitaev1997,BravyiKitaev98,Dennis2002} are among the most promising proposals for a robust quantum memory. Superconducting-qubit experiments have realized surface codes with distance up to seven, with  logical lifetimes exceeding those of the best constituent physical qubit~\cite{Krinner22,Zhao22,GoogleQAI23,GoogleQAI25}. Yet basic questions about their robustness against experimental noise remain.

One such question concerns coherent crosstalk. Residual $ZZ$ interactions -- referred to as coherent crosstalk -- between neighboring superconducting qubits are a major source of coherent errors~\cite{Sheldon16,Mundada19,Krinner20,Sarovar20,Kandala21,ZhaoEtAl22,Tripathi22}, including in recent surface-code experiments~\cite{Vezvaee26}. Such a coupling acts coherently by the unitary~$e^{i\theta Z_qZ_{q'}}$. Compared to  probabilistic Pauli noise,
 interference arising from coherent noise can substantially change the resulting logical error rate~\cite{GutierrezEtAl2016,GreenbaumDutton2018,BravyiEnglbrechtKoenigPeard18,VennBeri20}. We ask whether the surface code, with the  efficient minimum-weight perfect matching (MWPM) decoder, can withstand this noise. We consider the code-capacity setting: coherent crosstalk acts on a prepared code state, followed by noise-free syndrome extraction and recovery. Noise-resilience is quantified by the resulting logical channel's diamond-norm distance from the identity, averaged over syndromes; we are interested in  its scaling with the number of physical qubits as well as the noise strength expressed by the magnitude~$|\theta|$.

{\em Prior work.} Surface-code performance is best understood under stochastic
Pauli noise. The Gottesman--Knill theorem~\cite{Gottesman98,AaronsonGottesman04}
permits exact polynomial-time stabilizer simulation~\cite{Gidney21}, enabling
numerical threshold estimates~\cite{Dennis2002,WangHarringtonPreskill03,
RaussendorfHarrington07,Fowler2012}. Analytic threshold estimates are also
available for stochastic Pauli noise through mappings to disordered
statistical-mechanics models~\cite{Dennis2002}.
Local stochastic Pauli noise is therefore the standard first test of a
stabilizer code's viability.

Coherent noise is more difficult to simulate because interference enters the syndrome probabilities. Exact tensor-network contraction of coherent $ZZ$ crosstalk during syndrome extraction has been limited to distance~$d=3$ ($17$ qubits)~\cite{HuangEtAl20}. Methods that reach larger distances generally rely on approximations without rigorous error bounds. These include truncated tensor networks for general local noise~\cite{DarmawanPoulin17}, a hybrid stabilizer--matrix-product-state method applied to coherent crosstalk up to distance~$d=9$~\cite{HarperEtAl26}, and perturbative effective error models reaching distance~$d=7$~\cite{HinesEtAl26}. Related matrix-product-state studies have considered single-qubit rotations about a generic axis~\cite{BehrendsBeri25} and combined single- and two-qubit coherent noise~\cite{BaoAnand24}, again without rigorous a priori error bounds. Quasiprobability sampling gives unbiased estimates for coherent dephasing up to distance~$d=11$, but its cost grows exponentially with the non-Clifford content of the noise~\cite{LeBlondEtAl25}. Pauli twirling makes stabilizer simulation efficient, but changes the noise model by discarding interference~\cite{ZhouJiDing25}.

An important case in which efficient classical simulation is possible without approximation is coherent single-qubit $Z$ noise. A mapping to fermionic linear optics gives a polynomial-time algorithm that samples the syndrome distribution exactly for rotated surface codes~\cite{BravyiEnglbrechtKoenigPeard18}. Subsequent extensions apply to general planar graphs, including standard surface codes with smooth and rough boundaries~\cite{VennBeri20}. These algorithms have enabled numerical threshold studies on codes with more than $2\,000$ physical qubits~\cite{BravyiEnglbrechtKoenigPeard18,VennBeri20,VennBehrendsBeri23}.

Thus methods general enough to include coherent crosstalk were restricted to small codes, relied on empirically controlled approximations, incurred exponential sampling cost, or replaced the coherent model by its Pauli twirl. Efficient exact algorithms were known for single-qubit coherent noise, but not for nearest-neighbor coherent crosstalk on rotated surface codes. To our knowledge, no efficient exact simulation algorithm was available for nearest-neighbor coherent crosstalk at distances large enough to study threshold scaling. Our algorithm enables such a scaling analysis and model-dependent threshold extrapolations, whose limitations are discussed in Sec.~\ref{sec:numerics}.

\section{Our contribution}

{\em An efficient simulation algorithm for coherent crosstalk.}
We give a polynomial-time classical algorithm which takes as input the coupling strengths~$\{\theta_e\}_{e\in\cE}$ of nearest-neighbor $ZZ$-interactions (generating coherent crosstalk) on a rotated surface code of odd distance~$d$, and achieves the following: It (i)~samples from the distribution of syndromes obtained when the unitary $U_{\mathrm{nn}}(\bm\theta)=\prod_{e=\{q,q'\}\in\cE}e^{i\theta_eZ_qZ_{q'}}$ is applied to a code state and all stabilizers are measured, and (ii)~computes the final logical operation once an (efficiently computable) syndrome-dependent Pauli correction is applied to map back to the code space. The coupling strengths~$\{\theta_e\}_{e\in\cE}\subset\mathbb{R}$ may be arbitrary and may vary from edge to edge, where an edge~$e\in\cE$ of the surface code lattice connects two neighboring qubits. The cost per syndrome sample is~$O(d^6)$ elementary arithmetic operations. In our implementation, drawing one syndrome sample at $d=37$ ($1\,369$ physical qubits) takes about $5$~ms (see Table~\ref{tab:runtime}).

Using Algorithm~1, we assess the accuracy of the minimum-weight perfect matching decoder under coherent crosstalk in rotated surface codes at odd distances $d\in \{13,21,29,37\}$. We simulate error recovery with uniform coupling $\theta_e=\theta$ (for all $e\in \cE$) and compare with the incoherent model in which each interaction is replaced by its Pauli twirl, i.e., by a $Z_qZ_{q'}$ error occurring independently with probability $\sin^2\theta$ on each edge $e=\{q,q'\}\in\cE$ (see Sec.~\ref{sec:numerics}). We find that the residual logical error under coherent crosstalk is substantially larger than under the per-interaction Pauli twirl at the same angle. A joint finite-size scaling fit gives threshold extrapolations of $\theta_c/\pi=0.0342(4)$ under coherent crosstalk and $\theta_c/\pi=0.0493(5)$ under the per-interaction Pauli twirl. The corresponding values of $\sin^2\theta_c$ are approximately $1.1\%$ and $2.4\%$, respectively. These extrapolations depend on the scaling ansatz and the specified decoder: the quoted spreads measure sensitivity to the included distances, while the poor goodness of fit prevents controlled asymptotic threshold estimates (Sec.~\ref{sec:numerics}).

{\em Hardness of sampling under combined coherent noise.}
Our second result concerns the combination of single-qubit $Z$-rotations and nearest-neighbor $ZZ$ crosstalk, i.e., noise described by the unitary
\begin{align}
 U_{\mathrm{comb}}(\bm\varphi,\bm\theta)
 =\left(\prod_q e^{i\varphi_q Z_q}\right)
  \left(\prod_{e=\{q,q'\}\in\cE}e^{i\theta_eZ_qZ_{q'}}\right)\ .
 \label{eq:introcombined}
\end{align}
For a distance-$d$ rotated code $\Crotated{d}$ prepared in the state $\ket{\overline{0}}$, write $p(s)=p_{\ket{\overline{0}}}^{\Crotated{d}}(s\mid U_{\mathrm{comb}}(\bm\varphi,\bm\theta))$ for the probability of observing syndrome $s$ under ideal measurement. The superscript specifies the code family and distance, and the argument the noise. The input subscript is retained because combined noise can give a syndrome distribution that depends on the encoded state (Sec.~\ref{sec:hardness}). We show that the existence of a polynomial-time classical algorithm producing
-- for every allowed choice $(\bm\varphi,\bm\theta)$ of angles --
a sample from a distribution $\widetilde p$ satisfying, for a fixed $0\leq\epsilon<1$,
\begin{align}
 \bigl|\widetilde p(s)
       -p(s)\bigr|
 \leq\epsilon\,p(s)
 \label{eq:introapprox}
\end{align}
for every syndrome~$s$ would imply a collapse of the polynomial hierarchy to its third level (Theorem~\ref{thm:hardness} in Sec.~\ref{sec:hardness}).

{\em Three computational regimes.}
Single-qubit coherent $Z$-rotations and nearest-neighbor coherent $ZZ$-crosstalk each admit efficient exact simulation, whereas the syndrome distribution resulting from their combination is hard to sample from  under the multiplicative approximation guarantee above (see Eq.~\eqref{eq:introapprox}), unless the polynomial hierarchy collapses. Thus the noise model affects not only the logical error rate but also the computational complexity of assessing code performance. This distinction complements the numerical evidence of Ref.~\cite{BaoAnand24} for an entanglement barrier to matrix-product-state decoding under combined single-qubit and two-qubit rotations: our hardness result concerns syndrome sampling itself.

\subsection{Techniques}
{\em Reducing crosstalk to two independent single-qubit noise problems.}
Both single-qubit $Z$-rotations and nearest-neighbor $ZZ$ rotations are generated by commuting terms, each of which flips at most two $X$-stabilizers. Our algorithm exploits additional structure in the syndromes of the nearest-neighbor $ZZ$-errors. On a rotated code, the $X$-stabilizers split into two disjoint sets $A$ and $B$ such that every nearest-neighbor $ZZ$-error has its syndrome entirely within one set. Each set is in one-to-one correspondence with the $X$-stabilizers of a standard (unrotated) surface code of distance $(d+1)/2$. Under this correspondence, each physical $ZZ$-rotation has the same syndrome action as a single-qubit $Z$-rotation on one auxiliary code; interactions with the same syndrome combine by adding their angles.

As a consequence, the resulting syndrome distribution factorizes exactly as
\begin{align}
 p^{\Crotated{d}}(s\mid U_{\mathrm{nn}})=p^{\Cstandard{D}}(s^{(A)}\mid U^{(A)})p^{\Cstandard{D}}(s^{(B)}\mid U^{(B)})\ ,
 \label{eq:introfactor}
\end{align}
where $D=(d+1)/2$ and $s^{(A)}$, $s^{(B)}$ are the syndromes of the two auxiliary standard (unrotated) codes under their assigned single-qubit noise $U^{(A)}$, $U^{(B)}$ (Sec.~\ref{sec:mapping}, Eq.~\eqref{eq:factor}). No encoded input appears in Eq.~\eqref{eq:introfactor}: both sides are the same for any encoded state, on the original as well as on the auxiliary codes. One can therefore draw $s^{(A)}$ and $s^{(B)}$ independently and combine them to obtain an exact syndrome sample of the original code. We use the single-qubit simulation algorithms of Refs.~\cite{BravyiEnglbrechtKoenigPeard18,VennBeri20} as subroutines; the planar-graph construction of Ref.~\cite{VennBeri20} applies to the auxiliary standard (unrotated) surface codes.

The reduction of crosstalk simulation to single-qubit noise described here for rotated codes can similarly be carried out for odd-distance unrotated codes, using two auxiliary rotated codes and the polynomial-time simulator of Ref.~\cite{BravyiEnglbrechtKoenigPeard18}.

{\em Postselected universality with two-body interactions on a king lattice.}
Adding physical single-qubit rotations connects the two syndrome sets. On the syndrome register, the combined noise is equivalent to an instantaneous quantum polynomial-time (IQP) circuit: commuting diagonal gates act on $\ket+$ inputs, followed by measurements in the Hadamard basis. Together, physical single-qubit rotations and $ZZ$-rotations give an interaction graph isomorphic to a finite subgraph of a king lattice (Sec.~\ref{sec:hardness}). Thus a combination of the two physical noise types on the surface code corresponds to an IQP circuit with only two-body interactions on this lattice.

To establish hardness, we start from a universal nearest-neighbor brickwork circuit and replace its Hadamards by postselected gadgets~\cite{BremnerJozsaShepherd2011}. Expanding controlled-$Z$ gates leaves commuting $Z$ and $ZZ$ rotations on its planar space--time graph. A distributed Greenberger--Horne--Zeilinger (GHZ) reference state replaces the $Z$-rotations by $ZZ$ couplings: its two computational-basis branches give identical data probabilities in the Hadamard basis, so discarding the reference outcomes preserves the target distribution. Postselected parity measurements prepare this reference and mediate interactions along lattice paths. A fixed crossing construction routes the reference and data paths through distinct king-lattice sites, with polynomial overhead, degree at most four, and angles in multiples of $\pi/8$ (Appendix~\ref{app:hardness}). The resulting family is universal under postselection. A classical sampler with the pointwise guarantee in Eq.~\eqref{eq:introapprox} would then simulate postselected quantum computation, implying the stated collapse~\cite{Aaronson2005,BremnerJozsaShepherd2011}.

{\em Outline.}
The remainder of the paper is structured as follows. Section~\ref{sec:setup} defines the code and noise model, formulates the simulation tasks, and establishes exact Pauli recovery for crosstalk noise on odd-distance rotated codes. Section~\ref{sec:mapping} presents the reduction to single-qubit noise and the sampling algorithm, Sec.~\ref{sec:numerics} reports the numerical results, Sec.~\ref{sec:hardness} states the hardness of sampling under combined single-qubit and crosstalk noise, and Sec.~\ref{sec:discussion} discusses the scope of our results. 

\section{Problem statement}
\label{sec:setup}

\subsection{Surface codes}
\label{sec:model}

\begin{figure}[t]
\centering
\begin{minipage}[t]{118pt}
\centering
\subfloat[{Rotated surface code, \mbox{$d=5$}.}]{\begin{minipage}{\linewidth}
 \centering
 \includegraphics{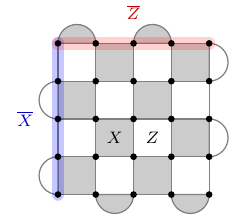}
\end{minipage}}
\end{minipage}\hfill
\begin{minipage}[t]{118pt}
\centering
\subfloat[{Standard (unrotated) surface code, $d=5$.}]{\begin{minipage}{\linewidth}
 \centering
 \includegraphics{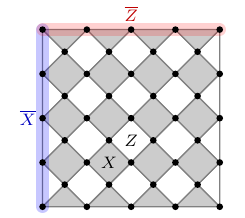}
\end{minipage}}
\end{minipage}
\caption{Surface codes of distance $d=5$. Circles are physical qubits and faces are stabilizers, grey for $X$ and white for $Z$. The blue and red strings are the logical representatives $\overline{X}$ and $\overline{Z}$. Full faces have weight four. In the rotated surface code (a), the boundary semicircles correspond to weight-two stabilizers. In the standard (unrotated) surface code (b), the boundary triangles are weight-three stabilizers. The left and right (top and bottom) sides of the standard surface code are also called rough (smooth) boundaries.}
\label{fig:surface_codes}
\end{figure}

We consider a rotated square surface code of odd distance $d\geq5$, encoding one qubit into $n=d^2$ physical qubits on a $d\times d$ array~\cite{Wen2003,BombinMartinDelgado2007,HorsmanFowlerDevittVanMeter12}. We denote this code by $\Crotated{d}$; the superscript specifies the code family and the subscript its distance (see Figure~\ref{fig:surface_codes}).

For the rotated surface code $\Crotated{d}$, label the physical qubits $q_{i,j}$, with $0\leq i,j<d$, increasing rightward in $i$ and upward in $j$. Set $D=(d+1)/2$ and define
\begin{align}
 G=\{(u,v):0\leq u\leq D-2,\ 0\leq v\leq D-1\}\ .
 \label{eq:componentgrid}
\end{align}
We index the $X$-type stabilizer generators by triples $(C,u,v)$, where $C\in\{A,B\}$ and $(u,v)\in G$. Thus, their index set is $\{A,B\}\times G$. Writing $X_{i,j}$ for a single-qubit Pauli $X$ on qubit $q_{i,j}$, the bulk stabilizer generators are
\begin{align}
 \begin{aligned}
 S^{X,(A)}_{u,v}&=\prod_{\alpha,\beta=0}^{1}X_{2u+\alpha,2v+\beta}\ ,
 &&0\leq v\leq D-2\ ,\\
 S^{X,(B)}_{u,v}&=\prod_{\alpha,\beta=0}^{1}X_{2u+1+\alpha,2v-1+\beta}\ ,
 &&1\leq v\leq D-1\ ,
 \end{aligned}
 \label{eq:bulkxchecks}
\end{align}
and the generators on the top and bottom boundaries are
\begin{align}
 \begin{aligned}
 S^{X,(A)}_{u,D-1}&=X_{2u,d-1}X_{2u+1,d-1}\ ,\\
 S^{X,(B)}_{u,0}&=X_{2u+1,0}X_{2u+2,0}\ ,
 \end{aligned}
 \label{eq:boundaryxchecks}
\end{align}
with $0\leq u\leq D-2$. The superscript label $C\in\{A,B\}$, which we refer to as a component, and $(u,v)\in G$ labels a generator within it.

Writing $Z_{i,j}$ for Pauli $Z$ on $q_{i,j}$, the $Z$-type stabilizer generators are the bulk operators
\begin{align}
 S^Z_{i,j}=\prod_{\alpha,\beta=0}^{1}Z_{i+\alpha,j+\beta}\ ,
 \quad 0\leq i,j\leq d-2,\quad i+j\text{ odd}\ ,
\end{align}
and the weight-two operators $Z_{0,2v}Z_{0,2v+1}$ and $Z_{d-1,2v+1}Z_{d-1,2v+2}$, for $0\leq v\leq D-2$, on the left and right boundaries. Encoded states have eigenvalue $+1$ for each of these commuting generators. There are $r_X=r_Z=(n-1)/2$ independent generators of each type; when coordinates are not needed, we enumerate them as $\{S_u^X\}_{u=1}^{r_X}$ and $\{S_v^Z\}_{v=1}^{r_Z}$. For brevity, we call the $X$-type and $Z$-type stabilizer generators $X$-stabilizers and $Z$-stabilizers throughout the paper.

Logical Pauli operators $\overline{X}$ and $\overline{Z}$ commute with all stabilizers and anticommute with each other. Multiplying them by stabilizers gives equivalent representatives. The distance $d$ is the minimum weight of a non-trivial logical Pauli operator, where weight counts the qubits an operator acts on non-trivially. We denote the logical-$Z$ eigenstates by $\ket{\overline{0}}$ and $\ket{\overline{1}}$, with eigenvalues $+1$ and $-1$, respectively.

For comparison, the distance-$d$ standard (unrotated) surface code $\Cstandard{d}$ has $d^2+(d-1)^2$ physical qubits and weight-three boundary stabilizers~\cite{Kitaev1997,BravyiKitaev98}. We use the representation in Fig.~\ref{fig:surface_codes}(b), with physical qubits at vertices and grey and white faces for $X$- and $Z$-stabilizers. Figure~\ref{fig:surface_codes} shows both families at distance five.

For either family, $n(\cC)$ denotes the number of physical qubits and $r_X(\cC)$ the number of independent $X$-stabilizers:
\begin{align}
 \begin{aligned}
 n(\Crotated{d})&=d^2 & n(\Cstandard{d})&=d^2+(d-1)^2,\\
 r_X(\Crotated{d})&=(d^2-1)/2 & r_X(\Cstandard{d})&=d(d-1).
 \end{aligned}
 \label{eq:codesizes}
\end{align}
Both codes encode one logical qubit, so their code spaces have dimension two.

\subsection{Problem formulation: coherent noise and recovery}
\label{sec:problem}

Label the $n$ physical qubits by $[n]=\{1,\ldots,n\}$. For any subset $\Gamma\subseteq[n]$, define the $n$-qubit Pauli operator $Z(\Gamma)=\prod_{q\in\Gamma} Z_q$. We call such an operator a $Z$-type Pauli operator, and refer to $\mathsf{supp}(Z(\Gamma))=\Gamma$ as  the support of $Z(\Gamma)$.
We consider coherent single-qubit (sq) rotations, nearest-neighbor (nn) $ZZ$ crosstalk and their combination (comb), i.e.,
\begin{align}
 U_{\mathrm{sq}}(\bm\varphi)&=\prod_q e^{i\varphi_q Z_q}\ ,\nonumber\\
 U_{\mathrm{nn}}(\bm\theta)&=\prod_{e=\{q,q'\}\in\cE}e^{i\theta_e Z_qZ_{q'}}\ ,\nonumber\\
 U_{\mathrm{comb}}(\bm\varphi,\bm\theta)&=U_{\mathrm{sq}}(\bm\varphi)U_{\mathrm{nn}}(\bm\theta)\ .
 \label{eq:noise}
\end{align}
Here $\bm\varphi=(\varphi_q)_{q\in[n]}$ and $\bm\theta=(\theta_e)_{e\in\cE}$ are collections of real angles that may vary independently. The set $\cE$ consists of the edges of the lattice in Fig.~\ref{fig:surface_codes}(a); each edge $e=\{q,q'\}\in\cE$ connects two horizontally or vertically neighboring physical qubits $q,q'$.

We assume ideal state preparation and stabilizer measurements, together with a specified correction rule. The error-correction procedure we are interested in consists of four steps:
\begin{enumerate}
 \item Prepare an encoded state $\ket{\logicalPsi}$.
 \item Apply the coherent noise $U$.

 \item Measure all stabilizers. Since $U$ is diagonal in the computational basis, it commutes with every $Z$-stabilizer, whose outcome therefore remains $+1$. The non-trivial measurement record is thus the binary vector
 \[
 s=(s_1,\ldots,s_{r_X})\in\{0,1\}^{r_X}\ ,
 \]
 of $X$-stabilizer outcomes, where $s_u=0$ denotes outcome $+1$ and $s_u=1$ denotes outcome $-1$. The corresponding projector is
 \begin{align}
 \Pi_s=\prod_{u=1}^{r_X}\frac{I+(-1)^{s_u}S_u^X}{2}\ .
 \label{eq:projector}
 \end{align}
 The syndrome $s$ occurs with probability
 \begin{align}
 p_{\Psi}^{\cC}(s\mid U)=\|\Pi_sU\ket{\logicalPsi}\|^2\ .
 \label{eq:pssample}
 \end{align}

 \item Apply a $Z$-type Pauli correction~$C_s$
 chosen to have the syndrome $s$, see  Eq.~\eqref{eq:correction} below.
\end{enumerate}

We now make the syndrome notation used in the third and fourth steps explicit. All binary-vector operations below are modulo two. For a $Z$-type Pauli error $Z(\Gamma)$, define its syndrome
\[
\partial\Gamma\in\{0,1\}^{r_X}
\]
through the commutation relations
\begin{align}
 S_u^X Z(\Gamma)
 =
 (-1)^{(\partial\Gamma)_u}
 Z(\Gamma)S_u^X\ 
 \quad\text{for all }u\in [r_X]\ .
 \label{eq:boundary}
\end{align}
Thus $(\partial\Gamma)_u=1$ exactly when the support of $S_u^X$ intersects $\Gamma$ in an odd number of physical qubits. Equivalently, $\partial\Gamma$ records the $X$-stabilizers violated after application of $Z(\Gamma)$.

A $Z$-type correction operator $C_s$ is compatible with syndrome $s$ precisely when $\partial(\mathsf{supp}(C_s))=s$, or, equivalently,
\begin{align}
 S_u^X C_s
 =
 (-1)^{s_u}C_sS_u^X
 \quad\text{for all }u\in [r_X]\ .
 \label{eq:correction}
\end{align}
Applying such a correction returns the measured state to the code space. For any syndrome $s$ with $p_{\Psi}^{\cC}(s\mid U)>0$, the resulting normalized state after correction is
\begin{align}
 \ket{\psi_s}
 =
 \frac{C_s\Pi_sU\ket{\logicalPsi}}
 {\sqrt{p_{\Psi}^{\cC}(s\mid U)}}\  .
 \label{eq:finalstate}
\end{align}
The problem of simulating noise and subsequent  recovery then boils down to the following two computational tasks:
\begin{enumerate}[(A)]
\item\label{it:computationaltaskA}
First, we must sample a syndrome $s$ according to the distribution $p_{\Psi}^{\cC}(s\mid U)$.
\item\label{it:computationaltaskB}
Second, given a sampled syndrome $s\in \{0,1\}^{r_X}$ and a compatible correction $C_s$, we must determine the resulting logical state~$\ket{\psi_s}$.
\end{enumerate} In the next section, we characterize this conditional logical action for coherent crosstalk noise.

\subsection{Logical action and recovery for coherent crosstalk}
\label{sec:recovery}

Both tasks~\eqref{it:computationaltaskA}, \eqref{it:computationaltaskB} simplify when the noise consists of coherent crosstalk errors alone, without single-qubit rotations. Whenever the syndrome distribution is independent of the encoded initial state $\ket{\overline{\Psi}}$, we omit the associated subscript and write $p^{\cC}(s\mid U)$.

\begin{lemma}[Logical action and exact recovery]
\label{lem:logicalaction}
Let $U=U_{\mathrm{nn}}$ be coherent $ZZ$-crosstalk noise (see Eq.~\eqref{eq:noise}) on  a rotated surface code $\Crotated{d}$ of odd distance~$d$. Let $C_s$ be any correction operation of $Z$-type compatible with the syndrome $s$ (see Eq.~\eqref{eq:correction}). Define
\begin{align}
 t_s=|\mathsf{supp}(C_s)|\bmod2\ .
 \label{eq:corrparity}
\end{align} 
Then there is a scalar $\kappa_s\in\mathbb{C}$ such that
\begin{align}
 C_s\Pi_sU_{\mathrm{nn}}\ket{\logicalPsi}=\kappa_s\,\logicalZ^{\,t_s}\ket{\logicalPsi}\ ,
 \qquad
 p^{\Crotated{d}}(s\mid U_{\mathrm{nn}})=|\kappa_s|^2\ ,
 \label{eq:parity}
\end{align}
for every encoded state $\ket{\logicalPsi}\in\cC$. In particular, the syndrome distribution $p^{\Crotated{d}}(s\mid U_{\mathrm{nn}})=|\kappa_s|^2$ does not depend on the encoded state. The conditional logical operation defined by Eq.~\eqref{eq:finalstate} 
(mapping the original state~$\ket{\logicalPsi}$ to the state $\ket{{\psi}_s}$) is -- up to an irrelevant global phase --  the identity when $t_s=0$ and $\logicalZ$ when $t_s=1$, that is,
\begin{align}
 \ket{{\psi}_s}\propto \logicalZ^{\,t_s}\ket{\logicalPsi}\ .
 \label{eq:rotatedlogicalrotation}
\end{align}
\end{lemma}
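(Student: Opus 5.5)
Expand $U_{\mathrm{nn}}(\bm\theta)=\prod_{e}(\cos\theta_e\,I+i\sin\theta_e\,Z_qZ_{q'})$ as a sum over edge subsets $F\subseteq\cE$:
\[
U_{\mathrm{nn}}=\sum_{F\subseteq\cE}c_F\,Z(\Gamma_F),
\qquad
c_F=\prod_{e\in F}(i\sin\theta_e)\prod_{e\notin F}\cos\theta_e,
\]
where $\Gamma_F$ is the set of qubits of odd degree in $F$. Each $Z(\Gamma_F)$ is a $Z$-type Pauli operator, so $\Pi_s Z(\Gamma_F)\ket{\logicalPsi}=Z(\Gamma_F)\ket{\logicalPsi}$ if $\partial\Gamma_F=s$, and $0$ otherwise. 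Hence
\[
C_s\Pi_sU_{\mathrm{nn}}\ket{\logicalPsi}=\sum_{F:\,\partial\Gamma_F=s}c_F\,C_sZ(\Gamma_F)\ket{\logicalPsi}.
\]
For each surviving $F$, the operator $C_sZ(\Gamma_F)$ is $Z$-type and has trivial $X$-syndrome. It therefore commutes with all stabilizers, so it is either an element of the $Z$-stabilizer group or a $Z$-stabilizer times $\logicalZ$.

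The central step is to determine which of the two cases occurs using only a parity invariant. The claim is that, on an odd-distance rotated code, the logical class of a syndrome-free $Z$-type operator $Z(\Gamma)$ is given by $|\Gamma|\bmod 2$. To see this, note that every $Z$-stabilizer generator (bulk weight four, boundary weight two) has even weight. The representative $\logicalZ$ can be taken as a full row or column of $d$ qubits, and this has odd weight because $d$ is odd. Since the $Z$-stabilizer group together with $\logicalZ$ generates all syndrome-free $Z$-type operators (every $Z$-type operator commuting with all $X$-stabilizers lies in the normalizer, which is spanned by these), the parity $|\Gamma|\bmod2$ is a homomorphism onto $\mathbb{Z}_2$ that distinguishes the two classes. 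Next, every $\Gamma_F$ has even size, because it is the symmetric difference of the two-element sets $\{q,q'\}$. Consequently $|\mathsf{supp}(C_sZ(\Gamma_F))|\equiv|\mathsf{supp}(C_s)|+|\Gamma_F|\equiv t_s \pmod 2$, independently of $F$. It follows that $C_sZ(\Gamma_F)=S_F\logicalZ^{\,t_s}$ for some $Z$-stabilizer $S_F$ that acts as the identity on the code space.

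Putting this together, $C_s\Pi_sU_{\mathrm{nn}}\ket{\logicalPsi}=\kappa_s\logicalZ^{\,t_s}\ket{\logicalPsi}$ with $\kappa_s=\sum_{F:\partial\Gamma_F=s}c_F$. There is a sign subtlety here, since $Z$-type operators are Hermitian products of $Z$'s with no phases. Products of commuting $Z$'s introduce no signs, so $S_F$ is literally a product of generators with eigenvalue $+1$, and $\kappa_s$ does not depend on $\ket{\logicalPsi}$. Because $C_s$ and $\logicalZ$ are unitary, $p^{\Crotated{d}}(s\mid U_{\mathrm{nn}})=\|\Pi_sU\ket{\logicalPsi}\|^2=\|C_s\Pi_sU\ket{\logicalPsi}\|^2=|\kappa_s|^2$. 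Dividing by $\sqrt{p}$ then gives Eq.~\eqref{eq:rotatedlogicalrotation} up to the phase $\kappa_s/|\kappa_s|$.

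The main obstacle is justifying the parity characterization rigorously, that is, confirming that the syndrome-free $Z$-type operators are exactly those generated by the $Z$-stabilizers and one odd-weight $\logicalZ$. I would cite the standard normalizer structure of a CSS code encoding one qubit: the $Z$-type operators commuting with all $X$-stabilizers form a space of dimension $r_Z+1$. With this in hand, the argument reduces to the weight-parity bookkeeping above.
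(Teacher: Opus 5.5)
Your proof is correct, but it follows a different route from the paper's. You expand $U_{\mathrm{nn}}$ into $Z$-type Pauli terms and keep those with syndrome $s$. You then classify each syndrome-free product $C_sZ(\Gamma_F)$ with the weight-parity homomorphism on the $Z$-type normalizer, which uses the standard CSS count $\dim\ker H_X=n-r_X=r_Z+1$.

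The paper never expands the unitary. It works with the code-space operator $K_s=P_{\cC}C_s\Pi_sU_{\mathrm{nn}}P_{\cC}$. This operator commutes with $\logicalZ$, so it must have the form $a_sP_{\cC}+b_s\logicalZ P_{\cC}$. The paper then uses that $\logicalX=X^{\otimes n}$ is a valid logical representative for odd $d$, that it commutes with $P_{\cC}$, $\Pi_s$ and $U_{\mathrm{nn}}$, and that it satisfies $\logicalX C_s\logicalX=(-1)^{t_s}C_s$. Conjugating by $\logicalX$ therefore kills $b_s$ when $t_s=0$ and $a_s$ when $t_s=1$.

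The two arguments rest on the same fact in dual form. $X^{\otimes n}$ being a logical $X$ is equivalent to weight parity separating the $Z$-stabilizer group from its $\logicalZ$ coset. Your version is fully elementary and gives the explicit formula $\kappa_s=\sum_{F:\partial\Gamma_F=s}c_F$. The paper's version avoids both the Pauli expansion and the normalizer dimension count, reducing the lemma to a single symmetry check.

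One small correction: with the paper's conventions only a full \emph{row} is a $\logicalZ$ representative. You can check directly that $Z$ on the bottom row meets every $X$-check in an even number of qubits. A full column, e.g.\ $Z$ on $q_{0,0},\ldots,q_{0,d-1}$, instead anticommutes with the weight-two boundary check $X_{0,d-1}X_{1,d-1}$. Since your argument only needs some odd-weight representative, this does not affect the proof.
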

We note that because we are working with $\Crotated{d}$ where $d$ is odd, the correction~$C_s$ can always be chosen such that $t_s=0$: for example, if $C_s$ has odd weight, replace it with
$ C_s^{\mathrm{even}}=\logicalZ^{\,t_s}C_s$.
This gives $t_s=0$ for every syndrome~$s$, hence the recovery  is exact for any 
 rotation angles for this choice.  This consequence of Lemma~\ref{lem:logicalaction} is in line with the observation of Ref.~\cite{WangEtAl25} that stochastic nearest-neighbor $ZZ$ errors on rotated surface codes are perfectly correctable.  We emphasize that this does not make error correction trivial in practice, however. The even-weight rule relies on the prior knowledge that the noise consists of $ZZ$-errors alone, and it produces a logical error on every odd-weight error, for example a single-qubit $Z$-error. Without such prior knowledge one typically uses a generic decoder such as MWPM, which performs well both under local stochastic Pauli noise~\cite{Dennis2002,FowlerThreshold12} and under single-qubit coherent noise~\cite{BravyiEnglbrechtKoenigPeard18}. Its performance under coherent crosstalk is studied in Sec.~\ref{sec:numerics}.

\emph{Proof sketch.} Every $Z$-stabilizer has even weight, whereas $\overline{Z}$ has odd weight. The noise contains only even-weight $Z$-errors; after a correction of parity $t_s$, all contributions with syndrome $s$ therefore have the same logical action $\logicalZ^{t_s}$. The detailed proof is given in Appendix~\ref{app:parityproof} $\square$.

Lemma~\ref{lem:logicalaction} means that the simulation task of determining the effective logical channel after error recovery reduces to sampling from the distribution $p(s)=p^{\Crotated{d}}(s\mid U_{\mathrm{nn}})$ of syndromes, independent of the encoded state~$\ket{\logicalPsi}$.
Once a syndrome~$s$ is drawn from~$p$, the residual logical sample can simply be read off from the decoder's output by determining the bit (parity) $t_s$.  We note that ordinary MWPM decoding does not return an even-weight correction for every syndrome~$s$; hence it produces a logical $Z$-error precisely on the sampled syndromes~$s$ with odd-weight correction~$C_s$.

The simplifications expressed by Lemma~\ref{lem:logicalaction} 
  do not hold for general coherent noise. In particular, 
  if the noise involves (Hamiltonian) $Z$-type Paulis which do not commute with~$\logicalX$, then  both the syndrome distribution and the logical state~$\ket{\psi_s}$ after correction can depend on the initial code state~$\ket{\logicalPsi}$. This already happens for single-qubit coherent rotations on standard surface codes~\cite{VennBeri20}; for odd-distance rotated surface codes, the syndrome distribution under single-qubit coherent noise is independent of the initial code state, while the conditional logical operation is generally a non-trivial $Z$-rotation~\cite{BravyiEnglbrechtKoenigPeard18}.

\subsection{Syndrome distribution and IQP circuits\label{sec:syndromeiqp}}
Here we show that the distribution of syndromes with coherent noise acting on a logical basis state~$\ket{\overline{\lambda}}$ can be rewritten as the distribution of outputs of an instantaneous quantum polynomial-time (IQP) circuit~\cite{BremnerJozsaShepherd2011}. For an arbitrary encoded initial state $\ket{\logicalPsi}$, the syndrome distribution is a convex combination of these distributions. Furthermore, the IQP circuit is generated by an Ising model on a specific graph determined by the code and noise considered.
This  observation is our main technical tool: it is used in the proof of Lemma~\ref{lem:mapping} below (which underlies our algorithm) and in Sec.~\ref{sec:hardness}, where we establish hardness of sampling in the case of general coherent noise.

The representation in terms of an IQP circuit holds for general CSS codes, without assumptions on their geometry or distance. The IQP circuits obtained below are generated by Ising-type Hamiltonians; the relation between such circuits and Ising partition functions was previously studied by Fujii and Morimae~\cite{FujiiMorimae2017}.
\begin{lemma}[Syndrome probabilities and IQP]
\label{lem:syndromeiqp}
Let $\cC$ be an $n$-qubit CSS code encoding $k$ logical qubits, with independent $X$-type stabilizer generators $S_1^X,\ldots,S_{r_X}^X$.
Let $\ket{\overline{\lambda}}$, $\lambda\in\{0,1\}^k$, denote the code state stabilized by all $X$- and $Z$-type generators and with eigenvalue $(-1)^{\lambda_j}$ for the $j$-th logical operator $\logicalZ_j$. Consider the coherent noise unitary
\[
 U=\prod_{\Gamma\in\cI}e^{i\theta_\Gamma Z(\Gamma)}\ ,
\]
where $\cI$ is any finite collection of supports $\Gamma\subseteq[n]$ of $Z$-type products on the $n$ physical qubits and $\theta_\Gamma\in\mathbb R$. Define $\partial\Gamma\in\{0,1\}^{r_X}$ as the ($X$-type) syndrome induced by the error~$Z(\Gamma)$, i.e., 
\[
 S_u^XZ(\Gamma)=(-1)^{(\partial\Gamma)_u}Z(\Gamma)S_u^X\ \ 
 \quad\text{for all }u\in [r_X]\ ,
\]
and define $\ell(\Gamma)\in\{0,1\}^k$ by
\[
 \logicalX_j Z(\Gamma)=(-1)^{\ell_j(\Gamma)}Z(\Gamma)\logicalX_j\ 
 \quad\text{for all }j\in [k]\ .
\]
Let $\Pi_s$ project onto the measured $X$-stabilizer syndrome $s$. Then,
\begin{align}
 \begin{aligned}
 p_\lambda^{\cC}(s\mid U)
 &=\|\Pi_sU\ket{\overline{\lambda}}\|^2\\
 &=\left|\bra{s}\HH^{\otimes r_X}W_{\ket{\overline{\lambda}}}\ket+^{\otimes r_X}\right|^2\\
 &\quad\text{for every }s\in\{0,1\}^{r_X}\ ,
 \end{aligned}
 \label{eq:syndromeiqpprobability}
\end{align}
where $\HH$ is the Hadamard gate and $W_{\ket{\overline{\lambda}}}$ is the unitary
\begin{align}
 W_{\ket{\overline{\lambda}}}=\prod_{\Gamma\in\cI}e^{i\theta_\Gamma(-1)^{\lambda\cdot\ell(\Gamma)}Z(\partial\Gamma)}\ .
 \label{eq:syndromeiqpcircuit}
\end{align}
Here $Z(\partial\Gamma)=\prod_{u=1}^{r_X}Z_u^{(\partial\Gamma)_u}$ acts on the syndrome register ($(\mathbb{C}^2)^{\otimes r_X}$) and $\lambda\cdot\ell(\Gamma)$ is the binary inner product. For a normalized encoded state $\ket{\logicalPsi}=\sum_\lambda a_\lambda\ket{\overline\lambda}\in\cC$, we have
\begin{align}
    \begin{aligned}
    p_{\ket{\logicalPsi}}^{\cC}(s\mid U)=\sum_\lambda |a_\lambda|^2p_\lambda^{\cC}(s\mid U)\\
    \quad\text{for every }s\in\{0,1\}^{r_X}\ .
  \end{aligned}
 \label{eq:generalinputmixture}
\end{align}
In particular, for one logical qubit (i.e. $k=1$) and $\ket{\logicalPsi}=a\ket{\overline0}+b\ket{\overline1}$,
\begin{align}
  p_{\ket{\logicalPsi}}^{\cC}(s\mid U)=|a|^2p_{\ket{\overline{0}}}^{\cC}(s\mid U)+|b|^2p_{\ket{\overline{1}}}^{\cC}(s\mid U).
  \label{eq:inputmixture}
\end{align}
\end{lemma}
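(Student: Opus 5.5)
The plan is to identify the orbit of a logical basis state under $Z$-type errors with the syndrome register. The key structural fact is the standard stabilizer-code decomposition of $|\overline\lambda\rangle$. Let $\mathcal S_X$ be the group generated by $S_1^X,\ldots,S_{r_X}^X$. Then $|\overline\lambda\rangle$ is (up to normalization) the uniform superposition of $X$-stabilizers applied to a computational-basis representative. For every $\Gamma\in\cI$ we have
\[
Z(\Gamma)|\overline\lambda\rangle=(-1)^{\lambda\cdot\ell(\Gamma)}\,|\overline\lambda;\partial\Gamma\rangle
\]
up to sign conventions. Here $|\overline\lambda;s\rangle$ denotes a fixed state in the joint eigenspace with $X$-syndrome $s$, $Z$-syndrome trivial, and logical $\overline Z$-eigenvalues $\lambda$.

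To make this precise I will fix, for each $s\in\{0,1\}^{r_X}$, a $Z$-type Pauli $E_s$ with $\partial E_s=s$ that commutes with all $\logicalX_j$. Such an $E_s$ exists because the $S^X_u$ and $\logicalX_j$ are independent, so any syndrome together with any logical-$X$ commutation pattern is realizable by a $Z$-type operator. Define $|\overline\lambda;s\rangle:=E_s|\overline\lambda\rangle$ and choose $E_s$ to be group-like, i.e.\ take $E_s=\prod_u E_{e_u}^{s_u}$. This gives $E_sE_{s'}=E_{s\oplus s'}$ exactly, because $Z$-type operators commute.

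The second step is to decompose each error against this basis. Any $Z(\Gamma)$ can be written as $Z(\Gamma)=\pm\,E_{\partial\Gamma}\,\prod_j\logicalZ_j^{\ell_j(\Gamma)}\,T$ with $T$ a $Z$-stabilizer. Indeed, $Z(\Gamma)E_{\partial\Gamma}$ has trivial $X$-syndrome, so it is a product of $Z$-stabilizers and logical $\logicalZ$'s. Its logical-$\logicalX$ commutation pattern is $\ell(\Gamma)$, which fixes the logical part. The sign is $+1$ since all factors are $Z$-type products of the same set. Acting on $|\overline\lambda\rangle$, $T$ gives $1$ and $\logicalZ^{\ell}$ gives $(-1)^{\lambda\cdot\ell(\Gamma)}$. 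Hence $Z(\Gamma)$ acts on $\mathrm{span}\{|\overline\lambda;s\rangle\}_s$ as $(-1)^{\lambda\cdot\ell(\Gamma)}$ times the shift $|\overline\lambda;s\rangle\mapsto|\overline\lambda;s\oplus\partial\Gamma\rangle$. Under the isometry $V:|s\rangle\mapsto|\overline\lambda;s\rangle$ this shift is $X(\partial\Gamma)$ on the syndrome register. These operators commute, so
\[
U|\overline\lambda\rangle=V\prod_\Gamma e^{i\theta_\Gamma(-1)^{\lambda\cdot\ell(\Gamma)}X(\partial\Gamma)}|0\rangle^{\otimes r_X}.
\]
Since $\Pi_s$ projects onto $V|s\rangle$, we get $p_\lambda(s)=|\langle s|\prod e^{i\theta'X(\partial\Gamma)}|0\rangle|^2$. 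Conjugating by $\HH^{\otimes r_X}$ turns $X$ into $Z$ and $|0\rangle$ into $|+\rangle$, which yields Eq.~\eqref{eq:syndromeiqpprobability}. The states $|\overline\lambda;s\rangle$ are orthonormal because they are distinct joint eigenstates, so $V$ is indeed an isometry.

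For the mixture formula, $U\ket{\overline\lambda}$ lies in the span of $\{|\overline\lambda;s\rangle\}_s$. Vectors with different $\lambda$ and the same $s$ are orthogonal, since they have distinct eigenvalues of the $\logicalZ_j$, which commute with both $U$ and $\Pi_s$. Therefore $\|\Pi_sU\sum_\lambda a_\lambda|\overline\lambda\rangle\|^2$ has no cross terms, which gives Eq.~\eqref{eq:generalinputmixture}; the case $k=1$ is Eq.~\eqref{eq:inputmixture}.

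The main obstacle I expect is the bookkeeping in the second step. One must ensure that $E_s$ is chosen consistently, so that the shift is exactly $X(\partial\Gamma)$ with no $s$-dependent signs, and that the logical factor is $\logicalZ^{\ell(\Gamma)}$ rather than an operator involving the chosen $E$'s. Choosing the $E_{e_u}$ to commute with all $\logicalX_j$ resolves both issues.
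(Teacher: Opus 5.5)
Your proof is correct and is essentially the paper's argument without the explicit encoding circuit. Taking $E_{e_u}=V(I\otimes Z_u\otimes I)V^\dagger$, your basis $E_s\ket{\overline\lambda}$ coincides with the paper's $\ket{s;\lambda}=V(\ket\lambda\otimes\HH^{\otimes r_X}\ket s\otimes\ket0^{\otimes r_Z})$, your decomposition $Z(\Gamma)=E_{\partial\Gamma}\logicalZ^{\ell(\Gamma)}T$ is Eq.~\eqref{eq:encodedpauli} read in the physical frame, and the mixture formula follows from the same orthogonality across $\lambda$. The only differences are that you obtain the $E_{e_u}$ by $\mathbb F_2$ linear algebra (from independence of the $S^X_u$ and $\logicalX_j$) instead of asserting a CNOT encoder, and that you work with shifts $X(\partial\Gamma)$ on $\ket0^{\otimes r_X}$ before a final Hadamard conjugation.
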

Below we write $W=W_{\ket{\overline{0}}}$. By Eq.~\eqref{eq:inputmixture}, input independence for one logical qubit is equivalent to $p_{\ket{\overline{0}}}^{\cC}=p_{\ket{\overline{1}}}^{\cC}$, in which case we write $p^{\cC}$ for the common distribution. On an odd-distance rotated code, choose $\logicalX=X^{\otimes n}$. For nearest-neighbor crosstalk alone, every active support has $\ell(\Gamma)=0$, so $W_{\ket{\overline{1}}}=W_{\ket{\overline{0}}}$, recovering the input independence of Lemma~\ref{lem:logicalaction}. For single-qubit noise alone, every active support has $\ell(\Gamma)=1$, so $W_{\ket{\overline{1}}}=W_{\ket{\overline{0}}}^\dagger$. Since $W_{\ket{\overline{0}}}$ is diagonal and the input and measurement vectors in Eq.~\eqref{eq:syndromeiqpprobability} are real, the two amplitudes are complex conjugates and again $p_{\ket{\overline{0}}}^{\cC}=p_{\ket{\overline{1}}}^{\cC}$. When these distributions differ, sampling for a known encoded input reduces to their classical mixture in Eq.~\eqref{eq:inputmixture}.

A proof of Lemma~\ref{lem:syndromeiqp} is given in Appendix~\ref{app:iqpamplitudes}. For the surface-code noise of Eq.~\eqref{eq:noise}, $\cI$ consists of the singletons $\{q\}$, $q\in[n]$, and the edges $e\in\cE$. Every such support $\Gamma\in\cI$ flips at most two $X$-stabilizers (i.e. $|\partial\Gamma| \leq 2$), so $W$ consists of one- and two-qubit diagonal gates only, that is, $W=e^{iH}$ for an Ising Hamiltonian $H$ on the syndrome register whose interaction graph is determined by the code geometry. Lemma~\ref{lem:surfacecodeiqp} in Appendix~\ref{app:iqpgraphs} identifies these graphs, with the following results:
\begin{enumerate}[(i)]
\item For nearest-neighbor crosstalk $U_{\mathrm{nn}}$ on a rotated surface code $\Crotated{d}$ of odd distance $d$, $H$ is an Ising model on two disconnected $\frac{d-1}{2}\times\frac{d+1}{2}$ square lattices, one for each of the two components $A$ and $B$ of $X$-stabilizers, with local fields (corresponding to $|\partial\Gamma|=1$) on one boundary column of each lattice (Fig.~\ref{fig:pairiqp}(a)). The $d-1$ edges $e$ for which $Z(e)$ is a weight-two $Z$-stabilizer of $\Crotated{d}$ contribute only a global phase ($|\partial\Gamma|=0$).
\item For single-qubit rotations $U_{\mathrm{sq}}$ on a standard (unrotated) surface code $\Cstandard{D}$ of distance $D$, $H$ is an Ising model on a single $(D-1)\times D$ square lattice, with local fields on its left and right boundary columns.
\item For combined noise $U_{\mathrm{comb}}=U_{\mathrm{sq}}U_{\mathrm{nn}}$ on a rotated surface code, $H$ is an Ising model on a finite sublattice of the king lattice, in which every site is coupled to its horizontal, vertical, and diagonal neighbors, with local fields on the sites near the left and right boundaries (Fig.~\ref{fig:kingiqp}). 
\end{enumerate}
The king graph in (iii) is non-planar, which is the source of the hardness result in Sec.~\ref{sec:hardness}.

\section{Efficient simulation of coherent crosstalk}
\label{sec:mapping}

\begin{figure*}[!t]
\centering
\begin{minipage}[t]{191pt}
\centering
\subfloat[{Original rotated surface code, $d=7$.}]{\begin{minipage}{\linewidth}
 \centering
 \includegraphics{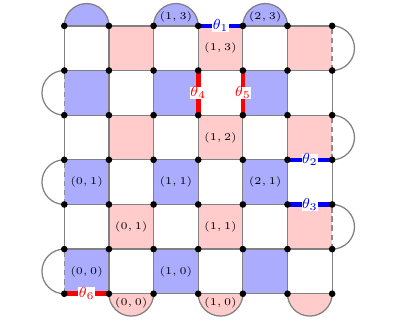}
\end{minipage}}
\end{minipage}\hfill
\begin{minipage}[t]{153pt}
\centering
\subfloat[{Auxiliary standard (unrotated) surface code $A$, $D=4$.}]{\begin{minipage}{\linewidth}
 \centering
 \includegraphics{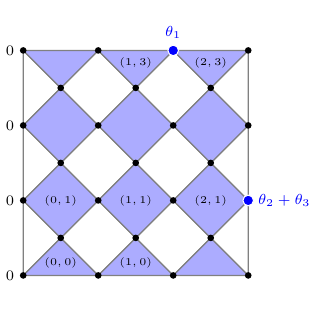}
\end{minipage}}
\end{minipage}\hfill
\begin{minipage}[t]{153pt}
\centering
\subfloat[{Auxiliary standard (unrotated) surface code $B$, $D=4$.}]{\begin{minipage}{\linewidth}
 \centering
 \includegraphics{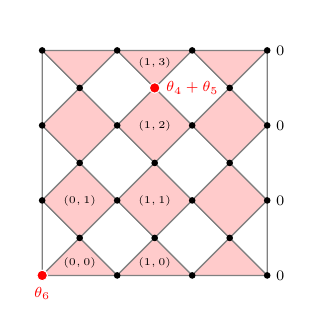}
\end{minipage}}
\end{minipage}
\caption{{Reduction of nearest-neighbor crosstalk $U_{\mathrm{nn}}$ on a rotated surface code of distance $d=7$ to single-qubit noise on two standard (unrotated) surface codes of distance $D=4$. Panel (a) shows the original rotated surface code; panels (b) and (c) show the independent auxiliary surface codes $A$ and $B$, respectively. Black dots denote physical qubits. Blue and red faces denote $X$-stabilizers of components $A$ and $B$, respectively, and white faces denote $Z$-stabilizers. An $X$ face $(u,v)$ in (a) corresponds to the face with the same color and grid label in (b) or (c). Thick blue and red bonds in (a), labeled at their centers by $\theta_k$, denote crosstalk rotations $e^{i\theta_k ZZ}$ assigned to auxiliary surface code $A$ or $B$, respectively. Grey dashed bonds are the weight-two $Z$-stabilizers on the left and right boundaries, whose rotations have trivial syndrome and contribute only a global phase. In (b) and (c), bright blue and red dots are the physical qubits of the auxiliary codes carrying the resulting single-qubit rotations $e^{i\varphi Z}$, labeled by the angle sum they receive, while the sites labeled $0$ on the inactive rough boundary of each surface code denote a physical qubit with no error (zero angle rotation). The illustrated mappings are: $\theta_1$ flips $(1,3)$ and $(2,3)$ in $A$; $\theta_2$ and $\theta_3$ each flip the single stabilizer $(2,1)$ in $A$; $\theta_4$ and $\theta_5$ each flip $(1,2)$ and $(1,3)$ in $B$; and $\theta_6$ flips the single stabilizer $(0,0)$ in $B$. A syndrome with two flipped stabilizers maps to the auxiliary qubit between the two correspondingly marked faces of the auxiliary code, and a syndrome with one flipped stabilizer maps to an auxiliary qubit on the rough boundary. Interactions with the same syndrome share an auxiliary qubit and their angles add, as in $\theta_2+\theta_3$ and $\theta_4+\theta_5$; this adds rotation angles, not Pauli error probabilities. The coordinate conventions are those of Sec.~\ref{sec:model}; Appendix~\ref{app:rotatedmapping} names the corresponding physical qubits of the auxiliary codes.}}
\label{fig:mapping}
\end{figure*}

By Sec.~\ref{sec:recovery}, simulating coherent crosstalk on a rotated surface code $\Crotated{d}$ amounts to producing a sample from  the syndrome distribution $p(s)=p^{\Crotated{d}}(s\mid U_{\mathrm{nn}})$; the conditional logical state then can be determined according to Lemma~\ref{lem:logicalaction}. We now reduce this sampling problem -- for an odd-distance-$d$ rotated surface code with nearest-neighbor crosstalk -- to the problem of producing  samples from the syndrome distributions of two independent auxiliary instances of   standard (unrotated) surface codes $\Cstandard{(d+1)/2}$ of distance~$(d+1)/2$, with single-qubit coherent noise. For the latter problem,  efficient samplers are known~\cite{BravyiEnglbrechtKoenigPeard18,VennBeri20}. The construction  relates crosstalk interaction angles in the original rotated code  to corresponding interaction angles of single-qubit $Z$-rotations in the two auxiliary standard codes; the syndrome distribution of the original code then factorizes into the product of syndrome distributions associated with the two auxiliary codes.

\subsection{Reduction to single-qubit noise}
\label{sec:reduction}

Consider a rotated surface code~$\Crotated{d}$ of odd distance~$d\geq5$. Set $D=(d+1)/2$. As shown in Fig.~\ref{fig:mapping} using two colors, the set of $X$-stabilizers of $\Crotated{d}$ is partitioned into two disjoint subsets $A$ (blue) and $B$ (red), as specified in Eqs.~\eqref{eq:bulkxchecks} and~\eqref{eq:boundaryxchecks}. Each component contains $|A|=|B|=r_X/2=D(D-1)$ $X$-stabilizers, indexed by the grid~$G$ of Eq.~\eqref{eq:componentgrid}.

Now consider two auxiliary standard (unrotated) surface codes of distance~$D$, each with $m=D^2+(D-1)^2$ physical qubits, which we denote by $\Caux{A}$ and $\Caux{B}$. By Eq.~\eqref{eq:codesizes}, $\Cstandard{D}$ has $D(D-1)$ $X$-stabilizers, and we index them by the same grid~$G$, as in Fig.~\ref{fig:mapping}(b),(c). This gives a bijection between the $X$-stabilizers of $\Crotated{d}$ and the disjoint union of the $X$-stabilizers of $\Caux{A}$ and $\Caux{B}$: the stabilizer $S^{X,(C)}_{u,v}$ of component $C\in\{A,B\}$ is identified with the stabilizer $\widetilde S^{X,(C)}_{u,v}$ of $\Caux{C}$ carrying the same grid index $(u,v)\in G$.

We write the syndrome as $s=(s^{(A)},s^{(B)})\in\{0,1\}^{r_X}$, where $s^{(C)}=(s^{(C)}_{u,v})_{(u,v)\in G}\in\{0,1\}^{r_X/2}$ collects the outcomes of component~$C$ and, under this bijection, is also a syndrome of $\Caux{C}$. 

Let $\Vrotated{d}$ be the set of physical qubits of $\Crotated{d}$ and $\Erotated{d}=\cE$ the set of edges, associated with nearest-neighbor $ZZ$-interactions of the form~$Z(e)$, $e\in \Erotated{d}$ appearing in Eq.~\eqref{eq:noise}. The pair $(\Vrotated{d},\Erotated{d})$ forms the $d\times d$-grid graph of Fig.~\ref{fig:mapping}(a). For an edge $e=\{q,q'\}\in\Erotated{d}$, we write $\partial e\in\{0,1\}^{r_X}$ for the syndrome of the error $Z(e)=Z_qZ_{q'}$ and $(\partial e)^{(C)}$ for its restriction to component~$C$. 

 In more detail, the error $Z(e)$ anticommutes with an $X$-stabilizer if and only if the corresponding face (defining the $X$-stabilizer) contains exactly one of the two qubits of~$e$. For a horizontal edge~$e$, these $X$-stabilizers correspond to the  faces located to its left and to its right. For a vertical edge~$e$, the error $Z(e)$ affects the $X$-stabilizers at the faces  below and above it. In either case both faces belong to the same component $C\in\{A,B\}$. Hence there are three kinds of edges~$e$ with different effects of the associated two-qubit operator~$Z(e)$. For an edge~$e$ in the bulk, the operator~$Z(e)$ changes the eigenvalue of two stabilizers of one component~$C$, whose faces are adjacent in the grid~$G$, that is, $(\partial e)^{(C)}$ has support $\{(u,v),(u+1,v)\}$ or $\{(u,v),(u,v+1)\}$, and $(\partial e)^{(\bar C)}=0$, where $\bar C\neq C$ denotes the other component. For a  horizontal edge~$e$ adjacent to the left or right boundary, the operator~$Z(e)$ flips a single $X$-stabilizer, which lies in the right column $u=D-2$ of component~$A$ or in the left column $u=0$ of~$B$. Finally, the $d-1$~vertical edges~$e$ on the left and right boundaries for which the operator~$Z(e)$ is a weight-two $Z$-stabilizer (dashed in Fig.~\ref{fig:mapping}(a)) have trivial syndrome. We write $\Erotatedstar{d}\subset\Erotated{d}$ for the set of edges of the first two kinds; these are exactly the edges~$e$ with $\partial e\neq0$.

We now consider the two standard surface codes~$\Caux{C}$, where $C\in \{A,B\}$. We write $\partial^{(C)}\{q\}\in\{0,1\}^{G}$ for the syndrome of  the error~$Z_q$ on a qubit $q\in\Vstandard{D}$ of $\Caux{C}$. A single-qubit error $Z_q$ in the bulk changes the eigenvalue of  the two $X$-stabilizers whose faces contain~$q$, which are adjacent in the grid~$G$. On the  other hand, a single-qubit error~$Z_q$ on any of the $D$ qubits on each left or right boundary, changes only the eigenvalue of the unique stabilizer in the left column $u=0$ or the right column $u=D-2$ whose face contains~$q$.

We note that for every $e\in\Erotatedstar{d}$ there is a unique component $C$ and a unique qubit $q\in\Caux{C}$  such that a $Z$-error on $q$ in $\Caux{C}$ flips exactly the stabilizers flipped by $Z(e)$ in the rotated surface code (under the bijection of syndromes discussed above). This defines a map
\begin{align}
\begin{matrix}
 \iota\colon&\Erotatedstar{d}&\to&\{A,B\}\times\Vstandard{D}\ ,\\
 & e& \mapsto & \iota(e)=(C,q)
 \end{matrix}
 \label{eq:pairgroups}
 \end{align}
 We note that for $\iota(e)=(C,q)$ this map satisfies 
 \begin{align}
 (\partial e)^{(C)}=\partial^{(C)}\{q\}\neq0\ ,\\
 (\partial e)^{(\bar C)}=0\ .
\end{align}

The map $\iota$ is not injective, since edges with the same syndrome have the same image (such as those carrying the labels $\theta_2$ and $\theta_3$ in Fig.~\ref{fig:mapping}). Its image consists of all qubits of $\Caux{A}$ and $\Caux{B}$ except those on the left rough boundary of $\Caux{A}$ and on the right rough boundary of $\Caux{B}$, and every qubit in the image is the image of one or two edges; Appendix~\ref{app:rotatedmapping} gives $\iota$ explicitly, Eq.~\eqref{eq:auxangles}.

Given crosstalk angles $\bm\theta=(\theta_e)_{e\in\Erotated{d}}$, we assign to each qubit of an auxiliary code the sum of the angles associated with the edges mapped to it, and subject the auxiliary codes to the resulting single-qubit rotations:
\begin{align}
 \varphi_q^{(C)}=\sum_{e\in\cE_{d}^{\mathrm{rot}, *}: \iota(e)=(C,q)}\theta_e\ ,
 \qquad
 U^{(C)}=\prod_{q\in\Vstandard{D}}e^{i\varphi_q^{(C)}Z_q^{(C)}}\ ,
 \label{eq:anglesum}
\end{align}
where $Z^{(C)}_q$ acts on qubit $q$ of $\Caux{C}$. A qubit outside the image of $\iota$ receives angle zero. The following lemma states the reduction that results from this construction.
\begin{lemma}[Reduction to single-qubit noise]
\label{lem:mapping}
Let $d\geq5$ be odd and $D=(d+1)/2$. Let $U_{\mathrm{nn}}=U_{\mathrm{nn}}(\bm\theta)$ be coherent nearest-neighbor $ZZ$-crosstalk noise, Eq.~\eqref{eq:noise}, on the rotated surface code $\Crotated{d}$, with arbitrary angles $\bm\theta$. Let $U^{(A)}$ and $U^{(B)}$ be the single-qubit noise~\eqref{eq:anglesum} on the auxiliary standard surface codes $\Caux{A}$ and $\Caux{B}$. Then, for every syndrome $s=(s^{(A)},s^{(B)})\in\{0,1\}^{r_X}$,
\begin{align}
 p^{\Crotated{d}}(s\mid U_{\mathrm{nn}})=p^{\Cstandard{D}}(s^{(A)}\mid U^{(A)})\,p^{\Cstandard{D}}(s^{(B)}\mid U^{(B)})\ ,
 \label{eq:factor}
\end{align}
where none of the three distributions depends on the encoded state.
\end{lemma}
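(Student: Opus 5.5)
We apply Lemma~\ref{lem:syndromeiqp} to each of the three syndrome distributions and compare the resulting IQP circuits gate by gate.

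For $\Crotated{d}$ with noise $U_{\mathrm{nn}}$, the supports are the edges $e\in\Erotated{d}$. With $\logicalX=X^{\otimes n}$ every edge has $\ell(e)=0$, since $|e|=2$. Lemma~\ref{lem:syndromeiqp} therefore gives $p(s)=|\bra{s}\HH^{\otimes r_X}W\ket{+}^{\otimes r_X}|^2$ for both logical basis inputs, with
\[
W=\prod_{e\in\Erotated{d}}e^{i\theta_eZ(\partial e)}\ .
\]
By Eq.~\eqref{eq:inputmixture}, $p$ is then independent of the encoded state. The $d-1$ edges with $\partial e=0$ contribute only a global phase, so $W$ reduces, up to phase, to the product over $\Erotatedstar{d}$.

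Next we use the component structure established above. For each $e\in\Erotatedstar{d}$ with $\iota(e)=(C,q)$, we have $(\partial e)^{(\bar C)}=0$ and $(\partial e)^{(C)}=\partial^{(C)}\{q\}$. Hence $Z(\partial e)$ acts only on the register of component $C$, as the operator $Z(\partial^{(C)}\{q\})$. All gates are diagonal and commute, so we can group them by their image under $\iota$. The exponents of edges with a common image add, which gives
\[
W\propto W^{(A)}\otimes W^{(B)},\qquad W^{(C)}=\prod_{q}e^{i\varphi^{(C)}_qZ(\partial^{(C)}\{q\})}\ ,
\]
with $\varphi^{(C)}_q$ as in Eq.~\eqref{eq:anglesum}. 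Qubits outside the image of $\iota$ have angle zero and contribute trivially. The state $\HH^{\otimes r_X}$ and $\ket{+}^{\otimes r_X}$, as well as $\bra{s}=\bra{s^{(A)}}\otimes\bra{s^{(B)}}$, all factor across the $A$/$B$ registers. The amplitude therefore factorizes, and so does its squared modulus:
\[
p(s)=\bigl|\bra{s^{(A)}}\HH^{\otimes}W^{(A)}\ket{+}^{\otimes}\bigr|^2\,\bigl|\bra{s^{(B)}}\HH^{\otimes}W^{(B)}\ket{+}^{\otimes}\bigr|^2\ .
\]

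Finally, we identify each factor with a syndrome distribution by applying Lemma~\ref{lem:syndromeiqp} to $\Caux{C}$ with noise $U^{(C)}$. Its supports are the singletons $\{q\}$ with syndrome $\partial^{(C)}\{q\}$, so for the input $\ket{\overline 0}$ its IQP unitary is exactly $W^{(C)}$. For the input $\ket{\overline1}$ each angle flips sign wherever $\ell(\{q\})=1$. The unitary for $\ket{\overline1}$ is not simply $(W^{(C)})^\dagger$ unless every active qubit has $\ell=1$, and showing input independence on the auxiliary codes is the main obstacle. The plan is to choose the logical $\logicalX$ of $\Caux{C}$ supported on the rough boundary that lies outside the image of $\iota$: the left rough boundary for $A$ and the right rough boundary for $B$. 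These $D$ qubits form a valid $\logicalX$ string connecting the two smooth boundaries. Since all qubits on that boundary carry angle zero, every active support has $\ell(\{q\})=0$. Then $W_{\ket{\overline1}}=W_{\ket{\overline0}}$, so $p^{\Cstandard{D}}(\cdot\mid U^{(C)})$ is input independent, and it equals the $C$-factor above. This proves Eq.~\eqref{eq:factor}. A remaining detail to check is that the grid-index bijection matches the syndrome maps consistently, which is the content of the explicit form of $\iota$ in Appendix~\ref{app:rotatedmapping}.
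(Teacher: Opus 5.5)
Your proposal is correct and follows essentially the paper's proof. You group the crosstalk IQP Hamiltonian via $\iota$ into $\chi I+H^{(A)}\otimes I+I\otimes H^{(B)}$, factor the amplitude across the two registers, and obtain input independence on the auxiliary codes by placing $\overline{X}^{(C)}$ on the inactive rough boundary, exactly as in Eq.~\eqref{eq:auxlogicalx}. The only minor deviation is that you derive input independence of the left-hand side from Lemma~\ref{lem:syndromeiqp} with $\logicalX=X^{\otimes n}$, whereas the appendix cites Lemma~\ref{lem:logicalaction}; the paper itself notes your route right after Lemma~\ref{lem:syndromeiqp}.
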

\noindent The proof of this lemma can be found in Appendix~\ref{app:mappingproof}. It also explains why the noise considered here makes all three distributions state-independent. Thus the original syndrome can be sampled by independently sampling the two auxiliary syndromes and combining their outcomes. The edges~$e$ for which $Z(e)$ has trivial syndrome ($\partial e=0$) do not affect syndrome probabilities $p^{\Crotated{d}}(s\mid U_{\mathrm{nn}})$, which is why they are omitted from Eq.~\eqref{eq:anglesum}.

\subsection{Sampling algorithm}
\label{sec:sampling}

For single-qubit coherent noise, efficient algorithms for sampling a syndrome are known for rotated codes~\cite{BravyiEnglbrechtKoenigPeard18} and standard codes~\cite{VennBeri20}. We write $\Sample_{\mathrm{sq}}^{\Cstandard{D}}(\bm\varphi)$ for the standard-code subroutine used below, which returns a syndrome $s$ distributed with probability $p^{\Cstandard{D}}(s\mid U_{\mathrm{sq}}(\bm\varphi))$.

The reduction expressed by Lemma~\ref{lem:mapping} gives the following syndrome sampling algorithm for nearest-neighbor coherent crosstalk on a rotated surface code.

\begin{center}
\fbox{\begin{minipage}{0.94\columnwidth}
\small
\textbf{Algorithm 1: $\Sample_{\mathrm{nn}}^{\Crotated{d}}(\bm\theta)$}\par
\textbf{(Syndrome sampling for nearest-neighbor coherent crosstalk)}\par
\emph{Input:} an odd distance $d\geq5$ and real angles $\bm\theta=(\theta_e)_{e\in\cE}$ for the edges of the rotated code $\Crotated{d}$.\par
\begin{enumerate}
 \setlength{\itemsep}{1pt}
 \setlength{\parskip}{0pt}
 \item Initialize the single-qubit error angle $\varphi_q^{(C)}=0$ for every physical qubit $q$ of each auxiliary code $C\in\{A,B\}$ of distance $D=(d+1)/2$. This angle specifies the rotation $e^{i\varphi_q^{(C)}Z_q^{(C)}}$.
 \item For each edge $e=\{q,q'\}\in\cE_d^{\mathrm{rot}}$, compute $\partial e\in\{0,1\}^{\{A,B\}\times G}$, the syndrome of the error $Z_qZ_{q'}$. Its nonzero entries label the flipped $X$-stabilizers, determined by Eqs.~\eqref{eq:bulkxchecks} and \eqref{eq:boundaryxchecks}. If $\partial e=0$, continue to the next edge. Otherwise, identify its component $C$ and the unique qubit $a$ of auxiliary code $\Caux{C}$ (i.e. compute $\iota(e)=(C,a)$) with $\partial^{(C)}\{a\}=(\partial e)^{(C)}$, as in Eq.~\eqref{eq:pairgroups}, and add $\theta_e$ to $\varphi_a^{(C)}$.
 \item Sample the two syndromes independently:
 \[
 \begin{aligned}
 s^{(A)}&\leftarrow\Sample_{\mathrm{sq}}^{\Cstandard{D}}(\bm\varphi^{(A)})\ ,\\
 s^{(B)}&\leftarrow\Sample_{\mathrm{sq}}^{\Cstandard{D}}(\bm\varphi^{(B)})\ .
 \end{aligned}
 \]
 \item Return the syndrome $s=(s^{(A)},s^{(B)})$ of the original code by concatenating the outcomes along the labels $\{A,B\}\times G$.
\end{enumerate}
\emph{Output:} $s\sim p^{\Crotated{d}}(\,\cdot\mid U_{\mathrm{nn}}(\bm\theta))$.
\end{minipage}}
\end{center}
The correctness of the algorithm, i.e., the fact that its output $s$ is distributed according to $p^{\Crotated{d}}(\,\cdot\mid U_{\mathrm{nn}}(\bm\theta))$, follows from Lemma~\ref{lem:mapping}.

Given a sampled syndrome $s$, a decoder is a map
\begin{align}
\begin{matrix}
 \mathsf{Dec}:&\{0,1\}^{r_X}&\rightarrow &\{\Gamma:\Gamma\subseteq[n]\}\\
 & s & \mapsto &\Gamma_s
 \end{matrix}
 \end{align}
with $\partial\Gamma_s=s$; its correction is $C_s=Z(\Gamma_s)$, applying $Z$ to the qubits in $\Gamma_s$. To determine the conditional logical action when using such a decoder, compute the support $\Gamma_s=\mathsf{Dec}(s)$ and its parity $t_s=|\Gamma_s|\bmod2$. By Lemma~\ref{lem:logicalaction}, $t_s=0$ means the identity and $t_s=1$ means it is a logical $\logicalZ$, equivalently a logical rotation by $\Theta_s=\pi t_s/2$ up to a global phase. 

\subsection{Runtime of algorithm}
\label{sec:cost}
A rotated surface code of distance $d$ maps to two auxiliary standard (unrotated) surface codes of distance $D=(d+1)/2$. Computation of angles, and concatenation of the output samples take $O(d^2)$ operations; the two calls to the single-qubit syndrome sampler dominate the complexity. Write $T(\mathsf A)$ for the number of elementary arithmetic operations required for one call to a sampler $\mathsf A$.

The standard-code subroutine uses fermionic linear optics (FLO). A distance-$D$ standard surface code $\Cstandard{D}$ has $D^2+(D-1)^2=O(D^2)$ physical qubits; a syndrome sample requires $O(D^2)$ elementary FLO operations, each realizable with $O(D^4)$ arithmetic operations~\cite{VennBeri20}. Hence
\begin{align}
 T\!\left(\Sample_{\mathrm{sq}}^{\Cstandard{D}}\right)=O(D^6)\ ,
 \label{eq:stdcost}
\end{align}
and the runtime per crosstalk syndrome sample is
\begin{align}
 T\!\left(\Sample_{\mathrm{nn}}^{\Crotated{d}}\right)
 =O(d^2+D^6)=O(d^6)\ .
 \label{eq:maincost}
\end{align}

Algorithm~1 and the identities underlying it are formulated for arbitrary real
angles and exact arithmetic. A numerical implementation must fix how the angles are
represented with machine precision. We distinguish two cases. If
the numbers $\cos\theta_e$ and $\sin\theta_e$ are given exactly as elements
of a fixed algebraic number field $K$ (rationals, or, as for the angles
$k\pi/8$ of Theorem~\ref{thm:hardness}, elements of
$\mathbb Q(\cos\frac{\pi}{8})$), with at most $b$ bits per field element, we can carry out all arithmetic exactly in $K$.
Every quantity the algorithm computes, from the entries of the fermionic
covariance matrix to the conditional probabilities from which the syndrome
bits are drawn, is a ratio of Pfaffians of matrices with entries in $K$ of
polynomial size, so all intermediate numbers have $\poly(d,b)$ bits and the
$O(d^6)$ arithmetic operations of Eq.~\eqref{eq:maincost} cost $\poly(d,b)$ bit
operations. Exact sampling additionally compares a uniform random number, revealed through
independent fair bits, with successively refined certified bounds on each conditional probability;
this takes expected polynomial bit time, rather than a fixed worst-case bound.
The output is distributed exactly as
$p^{\Crotated{d}}(\,\cdot\mid U_{\mathrm{nn}}(\bm\theta))$; in particular
the multiplicative criterion~\eqref{eq:introapprox} holds with $\epsilon=0$.
If instead the angles are arbitrary reals with polynomial-time access to their binary expansions, fix $\delta>0$, set
$b_{\mathrm{prec}}=\lceil\log_2(4d^2/\delta)\rceil$, and replace each $\theta_e$ by an angle
$\tilde\theta_e$ with rational $\cos\tilde\theta_e,\sin\tilde\theta_e$ of
$O(b_{\mathrm{prec}})$ bits and $|\theta_e-\tilde\theta_e|\leq2^{-b_{\mathrm{prec}}}$. Since each factor of
$U_{\mathrm{nn}}$ moves by at most $2^{-b_{\mathrm{prec}}}$ in operator norm and there are
fewer than $2d^2$ factors,
$\|U_{\mathrm{nn}}(\bm\theta)-U_{\mathrm{nn}}(\tilde{\bm\theta})\|\leq2d^2\,2^{-b_{\mathrm{prec}}}$,
and the exact sampler for $\tilde{\bm\theta}$ produces a distribution
$\tilde p$ with $|\tilde p(s)-p(s)|\leq\delta$ for every syndrome and
$\|\tilde p-p\|_1\leq\delta$, at expected bit cost $\poly(d,\log(1/\delta))$. No
multiplicative guarantee relative to $\bm\theta$ itself can be expected in
this case, since $p$ may vanish by interference at isolated angles.

\section{Numerical results}
\label{sec:numerics}

\begin{figure*}[!tbp]
\centering
\makebox[0.8\textwidth][c]{
\begin{minipage}[t]{0.42\textwidth}
\centering
\subfloat[{$P^L$ under coherent crosstalk noise $U_{\mathrm{nn}}$.}]{\begin{minipage}{\linewidth}
\centering
\includegraphics[width=\linewidth]{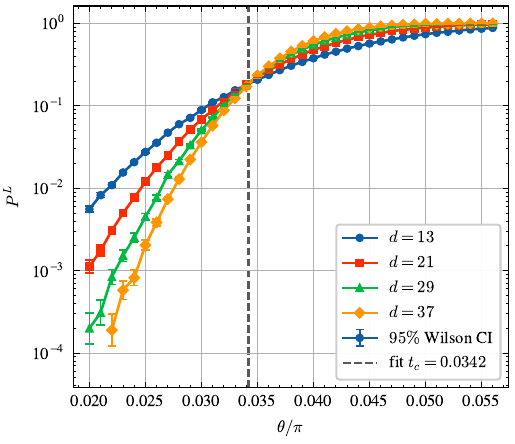}
\end{minipage}}
\end{minipage}\hspace{0.04\textwidth}
\begin{minipage}[t]{0.42\textwidth}
\centering
\subfloat[{$P^L_{\mathrm{twirl}}$ under the Pauli twirl $\mathcal{N}_{\mathrm{twirl}}$.}]{\begin{minipage}{\linewidth}
\centering
\includegraphics[width=\linewidth]{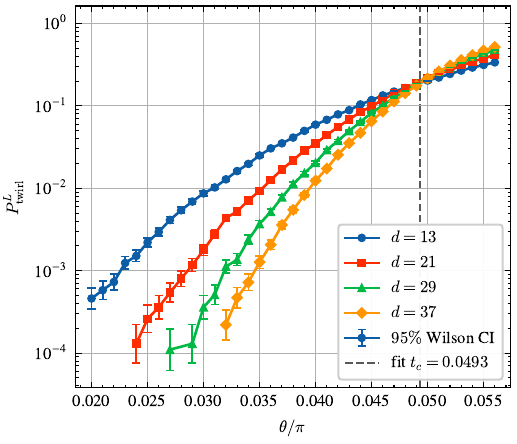}
\end{minipage}}
\end{minipage}}
\caption{Standard-MWPM logical error on odd-distance rotated codes for (a) coherent nearest-neighbor $ZZ$ rotations and (b) the per-interaction physical Pauli twirl of Eq.~\eqref{eq:physicaltwirl}. The vertical axes are logarithmic; $d$ denotes code distance. Each point uses $200\,000$ samples. Both panels cover the same sampled range, $0.020\leq\theta/\pi\leq0.056$, in steps of $0.001$. Only estimates strictly greater than $10^{-4}$ are plotted; lines connect the displayed estimates. The cutoff is for display only and does not imply zero logical error at omitted points. The normalization assigns $P^L=2$ to a certain logical Pauli error, so $P^L=1$ corresponds to failure probability $1/2$. Error bars show pointwise 95\% Wilson confidence intervals for Monte Carlo sampling uncertainty. Dashed lines mark threshold estimates from Table~\ref{tab:threshold}.}
\label{fig:numerics}
\end{figure*}

Minimum-weight perfect matching (MWPM) \cite{Edmonds1965,FowlerWhitesideHollenberg2012} is a common surface-code decoder and provides a useful benchmark when recovery is not tailored to the physical noise model. For the coherent crosstalk $U_{\mathrm{nn}}$ studied here, Eq.~\eqref{eq:noise}, a correction constrained to have even weight recovers exactly at every interaction strength, as shown in Sec.~\ref{sec:recovery}. The conventional MWPM decoder studied here can instead return odd-weight corrections and thereby introduce a logical $Z$-error. We therefore quantify the residual logical error it leaves under coherent crosstalk and compare it with that of the corresponding incoherent model.

We compare uniform coherent nearest-neighbor $ZZ$ noise $U_{\mathrm{nn}}$ with a physical Pauli approximation on rotated surface codes of distance $d\in\{13,21,29,37\}$. Preparation, syndrome measurements, and recovery are assumed to be ideal. For both noise models and every distance, we look at the same $37$ rotation angles
\begin{align}
 \begin{split}
 \theta/\pi\in{}&\{0.020+0.001j:j=0,\ldots,36\}\ .
 \end{split}
 \label{eq:numericalgrid}
\end{align}
Each distance, angle, and noise model uses $N=200\,000$ syndrome samples of the original code, giving $7.4\times10^6$ samples per distance and model and $5.92\times10^7$ samples in total. Figure~\ref{fig:numerics} shows diamond-norm errors for both models over this range. Syndromes for coherent crosstalk are generated by the reduction described in Sec.~\ref{sec:mapping}, with auxiliary surface code distances $D=(d+1)/2\in\{7,11,15,19\}$.

\subsection{Logical error, physical twirling, and recovery}

For fixed $d$ and $\theta$, write
$p(s)=p^{\Crotated{d}}(s\mid U_{\mathrm{nn}}(\theta))$, where
$U_{\mathrm{nn}}(\theta)$ denotes uniform angles $\theta_e=\theta$ on every edge $e$. We use the
average conditional diamond-norm error,
\begin{align}
 P^L=\sum_{s\in\{0,1\}^{r_X}} p(s)\,\|\Lambda_s-\mathrm{id}\|_\diamond\ ,
 \label{eq:logicalmetric}
\end{align}
where $\mathrm{id}$ is the (logical) identity channel and $\Lambda_s$ is the
normalized conditional logical channel
\begin{align}
\Lambda_s:\ \rho\mapsto C_s\Pi_sU_{\mathrm{nn}}\rho\,
 U_{\mathrm{nn}}^\dagger\Pi_sC_s^\dagger/p(s)
\end{align}
restricted to density operators supported on the code space. 
The map $\Lambda_s$ is linear because of the input independence of the syndrome distribution. Note that $P^L\in[0,2]$ is an error metric, not a probability: a definite logical Pauli-$Z$ error gives $P^L=2$.

For the crosstalk noise studied here, Eq.~\eqref{eq:parity} makes
every conditional channel $\Lambda_s$ either the identity or conjugation by $\logicalZ$,
according to the weight parity of the correction $C_s$
returned by the decoder. Hence
\begin{align}
 P^L=2p_{\mathrm{fail}}^L\ ,
 \qquad p_{\mathrm{fail}}^L=\Pr_{s\sim p}[\,|\mathsf{supp}(C_s)|\text{ odd}\,]\ .
 \label{eq:oddmetric}
\end{align}
In particular, averaging over the syndromes gives the averaged logical channel
\begin{align}
 \Lambda(\rho)=(1-p_{\mathrm{fail}}^L)\rho
                 +p_{\mathrm{fail}}^L \logicalZ\rho \logicalZ\ ,
 \label{eq:rotatedaverage}
\end{align}
a dephasing channel. Note that $P^L=\|\Lambda-\mathrm{id}\|_\diamond$. This last
identity is special to the considered coherent crosstalk noise and need not hold for
general coherent noise.

For comparison, we consider incoherent noise that replaces the rotation on each edge
$e=\{q,q'\}\in\cE$ separately by its Pauli twirl,
\begin{align}
 \cN_e^{\mathrm{twirl}}(\rho)
 =\cos^2\theta\,\rho+\sin^2\theta\,Z_qZ_{q'}\rho Z_qZ_{q'}\ ,
 \label{eq:physicaltwirl}
\end{align}
and composes these channels over all $2d(d-1)$ edges
$e\in\cE$,
\begin{align}
 \cN_{\mathrm{twirl}}
 =\cN_{e_1}^{\mathrm{twirl}}\circ\cdots
  \circ\cN_{e_{2d(d-1)}}^{\mathrm{twirl}}\ ,
 \label{eq:twirlcomposition}
\end{align}
which is order-independent because these $Z$-type Pauli channels commute. Equivalently, the $ZZ$-error $Z(e)$ is applied independently on each edge $e$ with probability $\sin^2\theta$. This twirls each interaction, not the complete product unitary. Its syndrome distribution $p_{\mathrm{twirl}}$ is state-independent, and every sampled error has even weight, so Eqs.~\eqref{eq:oddmetric} and~\eqref{eq:rotatedaverage} apply verbatim with $p$ replaced by $p_{\mathrm{twirl}}$. We write $P^L_{\mathrm{twirl}}$ for the resulting metric.

For each distance, angle, and coherent / incoherent (twirled) model we record the number $M$ of the $N$ sampled syndromes whose MWPM correction has odd weight, and report the empirical frequency
\begin{align}
 \hat f=M/N\ \qquad \text{and} \qquad \widehat P^L=2\hat f\ .
 \label{eq:estimator}
\end{align}

\subsection{Coherent versus incoherent errors}

Figure~\ref{fig:numerics} shows $\widehat P^L$ for both models over the full range of angles. At equal angle, MWPM leaves a substantially larger logical error under coherent crosstalk than under its twirl: at $d=37$ and $\theta/\pi=0.040$ we find $\widehat P^L=0.611(2)$ against $\widehat P^L_{\mathrm{twirl}}=0.0124(4)$, a ratio of about $49$. The parentheses in these two estimates denote one binomial standard error.

To summarize how the logical error depends on distance and angle, we perform a finite-size scaling fit. Near a threshold, the failure rate is expected to depend on distance and angle only through the single scaling variable $x=(t-t_c)\,d^{1/\nu}$, where $t=\theta/\pi$, $t_c$ is a location parameter that may be interpreted as a threshold if the scaling ansatz holds, and $\nu$ is a critical exponent. Truncating the dependence on $x$ at second order gives the standard fitting form~\cite{WangHarringtonPreskill03,TuckettBartlettFlammia18}
\begin{align}
 f(t,d)\simeq A+Bx+Cx^2\ ,\qquad x=(t-t_c)\,d^{1/\nu}\ ,
 \label{eq:thresholdscaling}
\end{align}
where $f=p_{\mathrm{fail}}^L$ is the odd-correction probability of Eq.~\eqref{eq:oddmetric} and $\hat f$ of Eq.~\eqref{eq:estimator} is its estimator. For each noise model we fit the five parameters $t_c,\nu,A,B,C$ once to all four distances simultaneously, with no correction-to-scaling term. Fit windows, weights, and the jackknife spread are specified in Appendix~\ref{app:thresholdfit}, where Fig.~\ref{fig:thresholdfit} also shows the resulting collapse of all four distances onto a single curve. Table~\ref{tab:threshold} collects the fitted values.

\begin{table}[!ht]
\centering
\begin{tabular}{lccc}
\toprule
Noise model & $t_c$ & $\nu$ & $\chi^2_{\mathrm{red}}$ \\
\midrule
Coherent       & $0.0342(4)$ & $1.56$ & $7.10$ \\
Physical twirl & $0.0493(5)$ & $1.49$ & $6.50$ \\
\bottomrule
\end{tabular}
\caption{Parameters of the joint fit of Eq.~\eqref{eq:thresholdscaling} to the four simulated distances. Parentheses give the leave-one-distance-out spread $s_{\mathrm{JK}}$ of Appendix~\ref{app:thresholdfit} in the last digit. Each fit has $23$ degrees of freedom.}
\label{tab:threshold}
\end{table}

The fitted $t_c$ is about $30\%$ smaller for coherent noise than for its twirl, and this ordering survives deleting any single distance or changing either fit window (Appendix~\ref{app:thresholdfit}). The absolute values, however, deserve caution: the reduced $\chi^2$-values of $7.10$ and $6.50$ show that the quadratic ansatz does not describe the data within sampling uncertainty. We therefore read Table~\ref{tab:threshold} as a set of model-dependent extrapolations rather than as controlled asymptotic thresholds, and note that $s_{\mathrm{JK}}$ covers only the set of distances included.

\subsection{Runtime}

Table~\ref{tab:runtime} reports the average per-sample cost of syndrome generation and of decoding. For coherent noise, sampling dominates decoding at every simulated distance; for the twirl it is two to three orders of magnitude cheaper, and decoding dominates throughout.

\begin{table}[!htbp]
\centering
\begin{tabular}{rrr}
\toprule
$d$ & Sampling ($\mu$s) & MWPM ($\mu$s) \\
\midrule
\multicolumn{3}{c}{Coherent crosstalk} \\
\midrule
13 & 175.800 & 24.549 \\
21 & 759.948 & 137.801 \\
29 & 2208.120 & 494.955 \\
37 & 5101.692 & 1362.194 \\
\midrule
\multicolumn{3}{c}{Physical Pauli twirl} \\
\midrule
13 & 1.477 & 10.330 \\
21 & 3.716 & 44.529 \\
29 & 7.043 & 143.966 \\
37 & 11.439 & 369.166 \\
\bottomrule
\end{tabular}
\caption{Measured stage costs on an Apple M2 Max using serial execution on odd-distance rotated surface codes. Each row is a sample-weighted mean over all $37$ angles of Eq.~\eqref{eq:numericalgrid}, with $200\,000$ samples per angle ($7.4\times10^6$ per row). Entries are microseconds per sample of the original code. For coherent noise, sampling includes both auxiliary code samples and the mapping to the $X$-stabilizer syndrome of the original code; for the physical twirl, it includes independent draws of the $ZZ$-error on each edge, combining their effects on each qubit, and constructing the syndrome. MWPM includes matching, correction construction, and validation. Setup and other wall-time overheads are excluded.}
\label{tab:runtime}
\end{table}

\section{Limits of efficient simulation: combined noise}
\label{sec:hardness}

The reduction of Sec.~\ref{sec:mapping} relies on the splitting of the syndromes of $ZZ$-errors into two disconnected sets. Single-qubit rotations connect them, and the resulting sampling problem is no longer known to be efficiently solvable. To state this precisely, take $\cC=\Crotated{d}$, $U=U_{\mathrm{comb}}(\bm\varphi,\bm\theta)$ of Eq.~\eqref{eq:noise}, and fix the encoded input to $\ket{\overline{0}}$. Combined noise can have an input-dependent syndrome distribution, so we keep the encoded state in the notation, writing $p(s)=p_{\ket{\overline{0}}}^{\cC}(s\mid U)$. Lemma~\ref{lem:syndromeiqp} directly gives
\begin{align}
 p_{\ket{\overline{0}}}(s)&=\left|\bra{s}\HH^{\otimes r_X}W\ket{+}^{\otimes r_X}\right|^2\ ,
\end{align}
where
\begin{align}
 W&=\prod_q e^{i\varphi_qZ(\partial\{q\})}
       \prod_{e\in\cE}e^{i\theta_eZ(\partial e)}\ .
 \label{eq:iqp}
\end{align}
Here $\HH$ is the Hadamard gate, $\ket{+}=\HH\ket0$, and
$Z(\delta)=\prod_{u=1}^{r_X}Z_u^{\delta_u}$ acts on the syndrome register (i.e., $(\mathbb{C}^2)^{\otimes r_X}$) rather than on the physical qubits, with $\delta\in\{0,1\}^{r_X}$ a syndrome vector. Eq.~\eqref{eq:iqp} is an instantaneous quantum polynomial-time (IQP) circuit: commuting diagonal gates act on $\ket{+}^{\otimes r_X}$ and every qubit is then measured in the eigenbasis of $X$. Because $|\partial\{q\}|\leq2$ and $|\partial e|\leq2$, each gate in the product defining $W$ couples at most two ``syndrome'' qubits.

To see explicitly that combined noise can give $p_{\ket{\overline{0}}}\neq p_{\ket{\overline{1}}}$, on $\Crotated{5}$ take angle $\pi/4$ on the four supports $\{q_{0,0},q_{1,0}\}$, $\{q_{2,0}\}$, $\{q_{3,0}\}$, and $\{q_{4,0}\}$, and zero elsewhere. Their product is the logical-$Z$ string along the bottom row. Only the empty and full selections of these supports have zero syndrome, giving $p_{\ket{\overline{0}}}(0)=1/4$ and $p_{\ket{\overline{1}}}(0)=0$.

The geometry of these couplings in $W$ is a king lattice, that is, the integer grid in which each site is adjacent to its horizontal, vertical, and diagonal neighbors. Label an $X$-stabilizer by the geometric coordinates $(x,y)$ of the lower-left corner of its face, extended to the boundary faces, so that $0\leq x\leq d-2$, $-1\leq y\leq d-1$ and $x+y$ is even; the bottom boundary stabilizers sit at $y=-1$ and the top ones at $y=d-1$. These geometric coordinates are related to the component labels $(C,u,v)$ of Eqs.~\eqref{eq:bulkxchecks} and~\eqref{eq:boundaryxchecks} by the map $\rho$ of Eq.~\eqref{eq:geocomponents}. A single-qubit rotation flips two stabilizers separated by $(\Delta x,\Delta y)=(1,\pm1)$, or a single one at the left or right boundary; a nearest-neighbor rotation flips two separated by $(2,0)$ or $(0,2)$, a single one, or none. In the coordinates
\begin{align}
 (a,b)=\left(\frac{x+y}{2},\frac{x-y}{2}\right)\ ,
 \label{eq:kingcoords}
\end{align}
the two-stabilizer supports of the first kind are the axial edges and those of the second kind are the diagonal edges of a king lattice; the remaining supports are single-site fields at the boundary. Distinct edges have disjoint sets of physical preimages, so in Eq.~\eqref{eq:iqp} their angles can be chosen independently. Appendix~\ref{app:hardness} shows that king-lattice circuits of this form, with angles in $\{k\pi/8:k=0,1,\ldots,15\}$ and at most four nonzero couplings at each site, are universal under postselection, and that unused king-lattice edges have coupling zero.

Our second main result is the following.
\begin{theorem}[Hardness of syndrome sampling]
\label{thm:hardness}
Consider rotated surface codes $\Crotated{d}$ of odd distance $d$ prepared in $\ket{\overline{0}}$ and subjected to the combined noise $U_{\mathrm{comb}}(\bm\varphi,\bm\theta)$ of Eq.~\eqref{eq:noise}. Write $p(s)=p_{\ket{\overline{0}}}^{\Crotated{d}}(s\mid U_{\mathrm{comb}}(\bm\varphi,\bm\theta))$. Fix $0\leq\epsilon<1$. Suppose there is a classical algorithm which, given $d$ and collections of angles $\bm\varphi,\bm\theta$ with all entries in $\{k\pi/8:k=0,1,\ldots,15\}$ and at most four nonzero couplings at each site of Eq.~\eqref{eq:iqp}, runs in time polynomial in $d$ and outputs a sample from a distribution $\widetilde p$ satisfying
\begin{align}
 |\widetilde p(s)-p(s)|\leq\epsilon\, p(s)\qquad\text{for every }s\ .
 \label{eq:mult}
\end{align}
Then the polynomial hierarchy collapses to its third level:
\begin{align}
 \PH\subseteq\BPP^{\NP}\subseteq\Sigma_3^p\cap\Pi_3^p\ .
 \label{eq:collapse}
\end{align}
\end{theorem}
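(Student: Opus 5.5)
The plan follows the Bremner--Jozsa--Shepherd route: show that the family of IQP circuits in Eq.~\eqref{eq:iqp} is universal under postselection, so that a multiplicative sampler would give $\mathsf{post}\BPP\supseteq\mathsf{post}\BQP=\PP$ and hence the collapse. First, I would use Lemma~\ref{lem:syndromeiqp} and the king-lattice description above to identify $p(s)$ exactly with the output distribution of a two-body IQP circuit on a finite king sublattice. In this circuit the axial couplings come from single-qubit angles $\varphi_q$, the diagonal couplings from $\theta_e$, and the boundary fields from boundary supports. Since distinct king edges have disjoint physical preimages, every pattern of couplings with angles in $\{k\pi/8\}$ and degree at most four is realized by some admissible $(\bm\varphi,\bm\theta)$, with the remaining angles set to zero. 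The construction must be uniform and run in time $\poly(d)$, where $d$ is polynomial in the size of the simulated circuit.

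Second, I would prove postselected universality of this family, which is the content of Appendix~\ref{app:hardness}. I would start from a universal nearest-neighbor brickwork circuit over $\{\HH,\mathsf{T},\CZ\}$ on $N$ qubits with depth $\poly(N)$. Each $\HH$ is replaced by the postselected gadget of~\cite{BremnerJozsaShepherd2011}: a fresh $\ket+$ line coupled by $\CZ$ and followed by postselected $X$-measurement. Writing $\CZ=e^{i\pi/4(I-Z)\otimes(I-Z)}$ turns everything into commuting $Z$ and $ZZ$ rotations with angles in multiples of $\pi/8$ on a planar space--time graph. The single-site $Z$ rotations are only available on the boundary, so I would trade them for $ZZ$ couplings to a reference qubit prepared in GHZ form $(\ket{0\cdots0}+\ket{1\cdots1})/\sqrt2$. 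In branch $0$ the couplings act as $e^{i\varphi Z}$; in branch $1$ they act as $e^{-i\varphi Z}$, which is the complex conjugate. Because the inputs and Hadamard-basis measurements are real, both branches give identical data marginals, so the reference outcomes can be ignored. The GHZ reference and long-range wires would be built by postselected parity chains: a path of $\ket+$ qubits joined by $\pi/4$ $ZZ$ couplings with postselected outcomes acts as an identity or $\HH$-type wire, up to known corrections.

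The main obstacle is embedding this into the king lattice. Reference lines must cross data lines, and I must keep degree at most four and use only distinct sites. I would first try a fixed crossing gadget that uses the diagonal edges, or alternatively a swap realized by three $\CZ$-with-Hadamard gadget layers, so that the embedding is planar except at these local crossings. Then I would verify the gadget amplitude by direct computation, check that its postselection probability is at least $2^{-\poly}$, and confirm polynomial overhead.

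Finally, I would run the standard argument. Suppose a classical sampler satisfies Eq.~\eqref{eq:mult}. Then any postselected quantum computation is simulated by postselecting the classical sample on the same events, with both the postselection probability and the conditional probability preserved up to factors $(1\pm\epsilon)$. Amplifying the gap of the $\mathsf{post}\BQP$ language beforehand makes this error harmless, which gives $\PP=\mathsf{post}\BQP\subseteq\mathsf{post}\BPP\subseteq\BPP^{\NP}$. By Toda's theorem, $\PH\subseteq \mathsf P^{\PP}\subseteq\BPP^{\NP}\subseteq\Sigma_3^p\cap\Pi_3^p$, which is Eq.~\eqref{eq:collapse}.
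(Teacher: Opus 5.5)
Your overall architecture is the paper's. It uses Bremner--Jozsa--Shepherd gadgets for the Hadamards and expands $\CZ$ into commuting $Z$ and $ZZ$ rotations on a planar space--time graph. A GHZ reference built from postselected parity checks replaces the bulk one-body rotations, and a crossing tile uses two crossing diagonal king edges. The layout is translated into the bulk of $\Crotated{d}$, and the proof finishes with amplification, $\PP=\post\BQP\subseteq\post\BPP\subseteq\BPP^{\NP}$, and Toda's theorem.

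However, the step that makes the GHZ reference harmless is justified incorrectly. In branch~$1$ only the one-body part changes sign. Write $B$ for the $ZZ$ part coming from the $\CZ$ expansions and $F$ for the field part; the two branches then apply $U_\pm=e^{i(B\pm F)}$. The complex conjugate of $U_-$ is $e^{i(F-B)}$, not $U_+$, because the $ZZ$ terms are not conjugated along with the fields. Realness of $\ket{+}^{\otimes n}$ and of the $X$-basis effects therefore only relates branch~$1$ to the circuit with all $ZZ$ angles negated. That is a different circuit, so ``identical data marginals'' does not follow from your argument.

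The correct reason, which the paper uses, is the global spin-flip symmetry of the Ising part. Conjugation by $X_D=X^{\otimes n}$ on the data flips every $Z_v$ and fixes every $Z_uZ_v$, so $U_-=X_DU_+X_D$. Since $X_D\ket{+}^{\otimes n}=\ket{+}^{\otimes n}$ and $\bra{x}_XX_D=(-1)^{|x|}\bra{x}_X$, one obtains $|\bra{x}_XU_-\ket{+}^{\otimes n}|^2=|\bra{x}_XU_+\ket{+}^{\otimes n}|^2$ for every data string $x$. This pointwise identity holds jointly for the output bits and the postselected gadget bits, which is what the subsequent conditioning requires.

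Two smaller remarks. First, an IQP circuit is non-adaptive and the bulk carries no fields, so any correction in your parity chains ``up to known corrections'' must be absorbed into signs of $ZZ$ angles. The paper avoids this bookkeeping with the opposite-sign check ${}_a\bra{+}G_{ua}(\pi/4)G_{av}(-\pi/4)\ket{+}_a=(I+Z_uZ_v)/2$, an exact equality projector with no phase. It builds the GHZ snake and, by Lemma~\ref{ghz:lem:transport}, transports a $ZZ$ gate along any odd-length path, with one-site detours fixing the parity. Second, conditioning distorts conditional probabilities by the factor $(1+\epsilon)/(1-\epsilon)$ rather than $1\pm\epsilon$; your amplification step absorbs this constant.
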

The proof is given in Appendix~\ref{app:hardness}. It is the postselection argument of Refs.~\cite{Aaronson2005,BremnerJozsaShepherd2011} applied to the king-lattice construction: Eq.~\eqref{eq:mult} distorts any conditional probability by at most $(1+\epsilon)/(1-\epsilon)$, which amplification of the source computation absorbs. The statement concerns independently programmable angles and pointwise multiplicative error, including exact preservation of zero probabilities. It does not address approximate sampling at constant total-variation distance (which typically requires additional complexity-theoretic assumptions)~\cite{BremnerMontanaroShepherd2016}, nor spatially uniform angles $\varphi_q=\varphi$, $\theta_e=\theta$. For $\bm\varphi=0$, the regime studied in Sec.~\ref{sec:numerics}, Algorithm~1 samples efficiently.

\section{Discussion}
\label{sec:discussion}

We have shown that nearest-neighbor coherent crosstalk on a rotated surface code of odd distance $d$ can be simulated exactly using two instances of the known efficient single-qubit noise samplers~\cite{BravyiEnglbrechtKoenigPeard18,VennBeri20}, at a cost of $O(d^6)$ arithmetic operations per syndrome sample.

On odd-distance rotated codes, crosstalk noise consisting of nearest-neighbor $ZZ$-interactions is exactly correctable, both for coherent and incoherent noise. Our numerical results demonstrate that the standard MWPM decoder instead leaves a nonzero logical Pauli error whose probability depends strongly on the physical model: coherent interference can substantially increase the logical error at the same angle compared to the Pauli twirled counterpart. Joint finite-size scaling fits extrapolate a lower threshold angle for coherent noise, $\theta_c/\pi=0.0342(4)$ versus $0.0493(5)$ for its twirl. This ordering is stable under the distance and fit-window checks performed here, but the poor goodness of fit makes the absolute thresholds model-dependent extrapolations.

Adding independently programmable single-qubit rotations connects the two sets of $X$-stabilizers, identifies the syndrome probabilities with the output probabilities of a king-lattice IQP circuit, and permits postselected universal computation. This results
in a complexity-theoretic obstruction to efficient sampling. We do not know whether the uniform weak-noise regime $\varphi_q=\varphi$, $\theta_e=\theta$ of the combined model admits an efficient sampler, or whether approximate sampling in total-variation distance is computationally hard.

Several open problems remain. The reduction of Sec.~\ref{sec:mapping} uses ideal syndrome measurements; 
dealing with measurement errors is a problem of immediate practical relevance. The parity structure of Lemma~\ref{lem:logicalaction} shows that a decoder aware of the noise model removes the logical error entirely, so the more relevant question for realistic devices is how coherent crosstalk interacts with additional noise components that break this parity, such as single-qubit rotations or stochastic errors.  

\begin{acknowledgments}
RK acknowledges funding from the European Research Council under Grant Agreement No.~101001976 (project EQUIPTNT) and from the Munich Quantum Valley, which is supported by the Bavarian state government with funds from the Hightech Agenda Bayern Plus. RK would like to thank the Isaac Newton Institute for Mathematical Sciences, Cambridge, for support and hospitality during the programme ``Mathematics of many-body entanglement'' where work on this paper was undertaken; this work was supported by EPSRC grant no EP/Z000580/1.
\end{acknowledgments}

\appendix
\section{Syndrome amplitudes and coordinate conventions}
\label{app:amplitudes}

\subsection{Proof of Lemma~\ref{lem:logicalaction}}
\label{app:parityproof}
In this proof $\cC=\Crotated{d}$ and $p_{\ket{\logicalPsi}}(s)=p_{\ket{\logicalPsi}}^{\cC}(s\mid U_{\mathrm{nn}})$; the lemma shows in particular that $p_{\ket{\logicalPsi}}$ does not depend on $\ket{\logicalPsi}$, so it equals the main-text $p^{\cC}(s\mid U_{\mathrm{nn}})$.

\emph{Proof.} The argument is similar to that of Lemma~4 in the Supplementary Material of Ref.~\cite{BravyiEnglbrechtKoenigPeard18}, which treats single-qubit rotations.
It follows from the fact that every Pauli term appearing in an (expansion of) the noise~$U$ has even ($Z$-)weight; it thus applies, in particular, to nearest-neighbor coherent $ZZ$-rotations.

On a rotated surface code with odd distance~$d$, we can consider the
logical $\overline{X}$-representative
\begin{align}
 \logicalX=X^{\otimes n}
 \label{eq:alldataX}
\end{align}
which acts on all qubits.
Then
\begin{align}
\logicalX C_s\logicalX=(-1)^{t_s}C_s\label{eq:logicalXcsparity}
\end{align} by definition; furthermore, every even-weight $Z$-type operator commutes with~$\logicalX$.

Let $P_{\cC}$ denote the projector onto the full code space (including the $Z$-stabilizer constraints), distinct from the $X$-syndrome projector $\Pi_0$, and let
\begin{align}
 K_s=P_{\cC}C_s\Pi_sU_{\mathrm{nn}}P_{\cC}\ .
 \label{eq:kdef}
\end{align}
Since $C_s$ is of $Z$-type and has syndrome $s$, within the joint $+1$ eigenspace of the $Z$-stabilizers it maps the syndrome-$s$ subspace into $\cC$, so $C_s\Pi_sU_{\mathrm{nn}}\ket{\logicalPsi}=K_s\ket{\logicalPsi}$ for every code state~$\ket{\logicalPsi}\in\cC$.
Since
\begin{align}
[P_{\cC},\logicalZ]=[\Pi_s,\logicalZ]=[C_s,\logicalZ]=[U_{\mathrm{nn}},\logicalZ]=0\ ,
\end{align}
we conclude that $[K_s,\logicalZ]=0$. Because $K_s=P_{\cC}K_sP_{\cC}$ preserves the code space~$\cC$ (i.e., is a logical operator) and commutes with~$\logicalZ$, we conclude that
\begin{align}
 K_s=a_sP_{\cC}+b_s\logicalZ P_{\cC}
 \label{eq:kdecomposition}
\end{align}
for some scalars $a_s,b_s\in\mathbb C$.  These scalars are constrained by the following symmetry argument:  since the operator $\logicalX=X^{\otimes n}$ of Eq.~\eqref{eq:alldataX} commutes with $P_{\cC}$, $\Pi_s$ and $U_{\mathrm{nn}}$ (as this is generated by even-weight $Z$-type operators), and  because of Eq.~\eqref{eq:logicalXcsparity}, we have $\logicalX C_s\logicalX=(-1)^{t_s}C_s$. Hence $\logicalX K_s\logicalX=(-1)^{t_s}K_s$. Since $\logicalX\,\logicalZ\,\logicalX=-\logicalZ$, conjugating Eq.~\eqref{eq:kdecomposition} by $\logicalX$ gives
\[
 a_sP_{\cC}-b_s\logicalZ P_{\cC}=(-1)^{t_s}\left(a_sP_{\cC}+b_s\logicalZ P_{\cC}\right)\ .
\]
Thus $b_s=0$ if $t_s=0$ and $a_s=0$ if $t_s=1$. It follows that $K_s=\kappa_s\logicalZ^{\,t_s}P_{\cC}$ with $\kappa_s=a_s$ or $\kappa_s=b_s$, respectively, which gives the conditional logical operation in Lemma~\ref{lem:logicalaction}. Since $C_s$ is unitary, Eq.~\eqref{eq:pssample} gives $p_{\ket{\logicalPsi}}(s)=\|K_s\ket{\logicalPsi}\|^2=|\kappa_s|^2$ for every state $\ket{\logicalPsi}\in\cC$. For $p_{\ket{\logicalPsi}}(s)>0$, Eq.~\eqref{eq:finalstate} becomes $\ket{\psi_s}=(\kappa_s/|\kappa_s|)\logicalZ^{\,t_s}\ket{\logicalPsi}$, which proves Eq.~\eqref{eq:parity} and~\eqref{eq:rotatedlogicalrotation}.\hfill$\square$

\subsection{Syndrome distributions and IQP-circuit output probabilities}
\label{app:iqpamplitudes}
Here we prove Lemma~\ref{lem:syndromeiqp}, which relates syndrome probabilities to output probabilities of IQP circuits. Within this appendix we suppress the code and noise arguments of Eq.~\eqref{eq:parity} and write $p_{\ket{\logicalPsi}}(s)$, with $p_{\ket{\overline{\lambda}}}$ for logical computational-basis inputs $\ket{\overline{\lambda}}$, $\lambda\in\{0,1\}^k$. The lemma applies to all such basis states and their superpositions.

\noindent \emph{Proof of Lemma~\ref{lem:syndromeiqp}.}
The claimed identity follows by conjugating the noise and syndrome projectors by a CSS encoding map~$V$. To describe this in detail, partition an unencoded $n$-qubit register into $k$ logical qubits, $r_X$ syndrome qubits, and $r_Z=n-k-r_X$ qubits holding the eigenvalues of independent $Z$-stabilizers. Choose $X$-type logical representatives $\logicalX_j$ dual to the chosen $\logicalZ_j$, $j\in[k]$, with overall phase $+1$. There is a unitary encoding map~$V$ such that
\begin{align}
\begin{matrix}
 V^\dagger \logicalX_j V&=X_j\otimes I\otimes I\ ,\\
 V^\dagger \logicalZ_j V&=Z_j\otimes I\otimes I
 \qquad &\textrm{ for every }\qquad j\in [k]\\
 V^\dagger S_u^XV&=I\otimes X_u\otimes I\qquad &\textrm{ for every } u\in [r_X]\\
 V^\dagger S_v^ZV&=I\otimes I\otimes Z_v\qquad &\textrm{ for every }v\in [r_Z]\ .
 \end{matrix}
 \label{eq:cssencoding}
\end{align}
For the CSS generators and logical representatives used here, $V$ can be chosen to consist only of CNOT gates.

Consequently, for every physical-qubit support $\Gamma\subseteq[n]$ there are $\ell(\Gamma)\subseteq [k]$ and a subset $T(\Gamma)\subseteq[r_Z]$ such that
\begin{align}
 V^\dagger Z(\Gamma)V
 =Z(\ell(\Gamma))\otimes Z(\partial\Gamma)
       \otimes Z(T(\Gamma))\ .
 \label{eq:encodedpauli}
\end{align}
Here $\ell(\Gamma)$ is defined by 
the condition that $j\in \ell(\Gamma)$ for $j\in [k]$ if and only if 
$\logicalX_j Z(\Gamma)=(-1)Z(\Gamma)\logicalX_j$. The operator
$Z(\ell(\Gamma))$ 
 acts on the ``logical'' register~$(\mathbb{C}^2)^{\otimes k}$. 
 The next $r_X$ qubits are $X$-type syndrome registers, 
 with $Z(\partial \Gamma)$ corresponding to the syndrome bits that are flipped when applying the operator~$Z(\Gamma)$ to a code state. Finally, the last factor~$Z(T(\Gamma))$ in Eq.~\eqref{eq:encodedpauli} acts on an $r_Z$-qubit register corresponding to the $Z$-type stabilizers.

For $\lambda\in\{0,1\}^k$, fix the phases of the logical basis states by
\begin{align}
 \ket{\overline{\lambda}}
 =V\bigl(\ket\lambda\otimes\ket+^{\otimes r_X}
                       \otimes\ket0^{\otimes r_Z}\bigr)\ .
 \label{eq:encodedlogicalstate}
\end{align}
Let $\cI$ denote the elementary supports of the noise, and write $\theta_\Gamma$ for the  real angle associated with the operator~$Z(\Gamma)$, where $\Gamma\in \cI$. For Eq.~\eqref{eq:noise}, these are the singletons with $\theta_{\{q\}}=\varphi_q$ and the edges $e\in\cE$ with $\theta_e$ as in the main text. Define the noise Hamiltonian and unitary by
\begin{align}
 H^{\mathrm{noise}}=\sum_{\Gamma\in\cI}\theta_\Gamma Z(\Gamma)\qquad\textrm{ and }\qquad 
 \qquad U=e^{iH^{\mathrm{noise}}}\ .
 \label{eq:physicalhamiltonian}
\end{align}
Eq.~\eqref{eq:encodedpauli} implies
\begin{align}
 \begin{aligned}
 V^\dagger UV&=\exp\!\left(i\sum_{\Gamma\in\cI}\theta_\Gamma
 Z(\ell(\Gamma))\otimes Z(\partial\Gamma)
                    \otimes Z(T(\Gamma))\right)\ ,\\
 H_\lambda
 &=\sum_{\Gamma\in\cI}\theta_\Gamma(-1)^{\lambda\cdot\ell(\Gamma)}Z(\partial\Gamma)\ ,
 \qquad W_{\ket{\overline{\lambda}}}=e^{iH_\lambda}\ .
 \end{aligned}
 \label{eq:encodednoise}
\end{align}
Here we set  $\lambda\cdot \ell(\Gamma)=\sum_{j\in\ell(\Gamma)} \lambda_j$. Indeed, the logical factor acts on $\ket\lambda$ with eigenvalue $(-1)^{\lambda\cdot\ell(\Gamma)}$, while $Z(T(\Gamma))\ket0^{\otimes r_Z}=\ket0^{\otimes r_Z}$. Thus
\begin{align}
 V^\dagger U\ket{\overline{\lambda}}
 =\ket\lambda\otimes W_{\ket{\overline{\lambda}}}\ket+^{\otimes r_X}
                 \otimes\ket0^{\otimes r_Z}\ .
 \label{eq:encodednoisestate}
\end{align}
The syndrome projector transforms just as directly:
\begin{align}
 V^\dagger\Pi_sV
 =I\otimes\bigl(\HH^{\otimes r_X}\ket s\bra s
                   \HH^{\otimes r_X}\bigr)\otimes I\ .
 \label{eq:encodedsyndromeprojector}
\end{align}
For each syndrome $s$ define the normalized state
\begin{align}
 \ket{s;\lambda}
 =V\bigl(\ket\lambda\otimes\HH^{\otimes r_X}\ket s
                 \otimes\ket0^{\otimes r_Z}\bigr)\ .
 \label{eq:syndromestate}
\end{align}
These $2^{r_X}$ states form an orthonormal basis of the joint eigenspace with eigenvalue $+1$ for the $Z$-stabilizers and $(-1)^{\lambda_j}$ for each $\logicalZ_j$, with $\ket{0;\lambda}=\ket{\overline{\lambda}}$, and $\Pi_s\ket{t;\lambda}=\delta_{s,t}\ket{s;\lambda}$ by Eq.~\eqref{eq:encodedsyndromeprojector}. Unitarity of $V$ and Eq.~\eqref{eq:encodednoisestate} therefore give
\begin{align}
 \begin{aligned}
 A_\lambda(s)&=\bra{s;\lambda}U\ket{\overline{\lambda}}
 =\bra s\HH^{\otimes r_X}W_{\ket{\overline{\lambda}}}\ket+^{\otimes r_X}\ ,\\
 \Pi_sU\ket{\overline{\lambda}}&=A_\lambda(s)\ket{s;\lambda},\qquad
 \|\Pi_sU\ket{\overline{\lambda}}\|^2=|A_\lambda(s)|^2\ .
 \end{aligned}
 \label{eq:iqpbranches}
\end{align}
Since the terms of $H_\lambda$ commute,
\begin{align}
 W_{\ket{\overline{\lambda}}}=e^{iH_\lambda}
   =\prod_{\Gamma\in\cI}e^{i\theta_\Gamma(-1)^{\lambda\cdot\ell(\Gamma)}Z(\partial\Gamma)}\ .
 \label{eq:iqpgeneral}
\end{align}
Eqs.~\eqref{eq:iqpbranches} and~\eqref{eq:iqpgeneral} give Eqs.~\eqref{eq:syndromeiqpprobability} and~\eqref{eq:syndromeiqpcircuit}: the diagonal gates commute, the input is $\ket+^{\otimes r_X}$, and the syndrome is the outcome of measuring each register qubit in the $X$ basis. Each factor in $W_{\ket{\overline{\lambda}}}$ is supported on the bits marked by $\partial\Gamma$. The derivation applies to any set of $Z$-type elementary supports on a CSS code.

For a superposition $\ket{\logicalPsi}=\sum_\lambda a_\lambda\ket{\overline\lambda}$, Eq.~\eqref{eq:iqpbranches} gives $\Pi_sU\ket{\logicalPsi}=\sum_\lambda a_\lambda A_\lambda(s)\ket{s;\lambda}$. The states $\ket{s;\lambda}$ are orthonormal for distinct $\lambda$, so taking the squared norm proves Eq.~\eqref{eq:generalinputmixture} and its one-logical-qubit specialization, Eq.~\eqref{eq:inputmixture}.

Grouping the terms of the Hamiltonian by their syndrome gives
\begin{align}
 \begin{aligned}
 H_\lambda&=\chi_\lambda I+\sum_{\delta\neq0}
 \left[\sum_{\substack{\Gamma\in\cI\\\partial\Gamma=\delta}}
 \theta_\Gamma(-1)^{\lambda\cdot\ell(\Gamma)}\right]Z(\delta)\ ,\\
 \chi_\lambda&=\sum_{\substack{\Gamma\in\cI\\\partial\Gamma=0}}
 \theta_\Gamma(-1)^{\lambda\cdot\ell(\Gamma)}\ .
 \end{aligned}
 \label{eq:effhamiltonian}
\end{align}
Thus supports~$\Gamma$ for which $Z(\Gamma)$ causes 
an identical syndrome~$\delta$ contribute additively to one effective coupling, while trivial-syndrome terms contribute the (irrelevant) global phase $e^{i\chi_\lambda}$. For logical-zero input we write $H=H_{\ket{\overline{0}}}$, $W=W_{\ket{\overline{0}}}$ and $\chi=\chi_{\ket{\overline{0}}}$, as used below. This proves Lemma~\ref{lem:syndromeiqp}.\hfill$\square$

\subsection{Interaction graphs of IQP circuits}
\label{app:iqpgraphs}
\label{app:syndromeiqp}
\label{app:rotatedmapping}

\begin{figure*}[!t]
\centering
\begin{minipage}[c]{0.60\textwidth}
\centering
\subfloat[{Rotated patch, $d=5$.\label{fig:mappingd5-rotated}}]{\begin{minipage}{\linewidth}
 \centering
 \includegraphics[width=\linewidth]{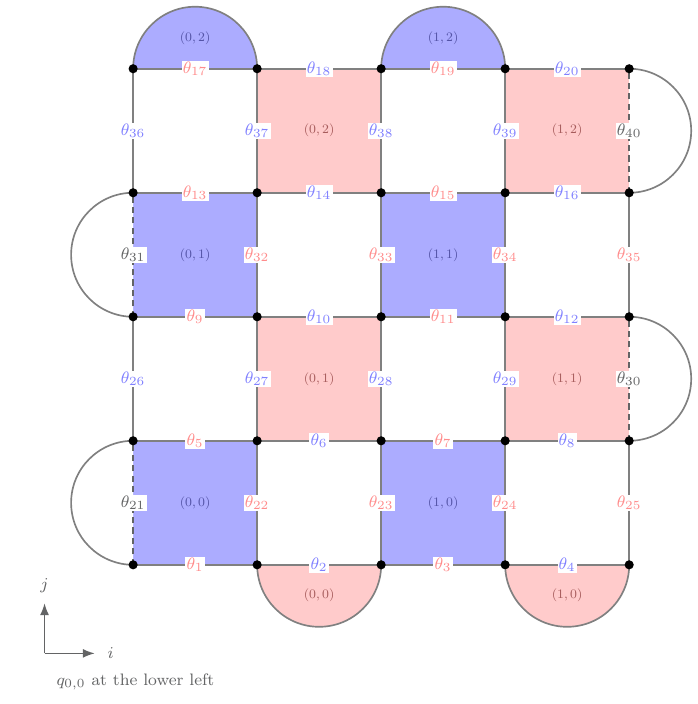}
\end{minipage}}
\end{minipage}\hfill
\begin{minipage}[c]{0.30\textwidth}
\centering
\begin{minipage}[t]{\linewidth}
\centering
\subfloat[{Auxiliary $A$, $D=3$.\label{fig:mappingd5-auxA}}]{\begin{minipage}{\linewidth}
 \centering
 \includegraphics[width=\linewidth]{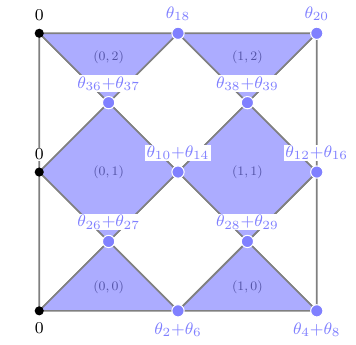}
\end{minipage}}
\end{minipage}
\par\vspace{1em}
\begin{minipage}[t]{\linewidth}
\centering
\subfloat[{Auxiliary $B$, $D=3$.\label{fig:mappingd5-auxB}}]{\begin{minipage}{\linewidth}
 \centering
 \includegraphics[width=\linewidth]{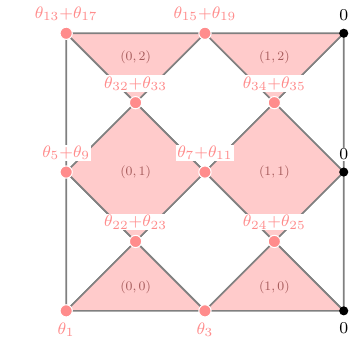}
\end{minipage}}
\end{minipage}
\end{minipage}
\caption{The assignment of Eq.~\eqref{eq:auxangles} in full for $d=5$ and $D=3$. (a)~The $2d(d-1)=40$ edge angles are numbered $\theta_{1+i+4j}=\theta^{\mathrm h}_{i,j}$ ($0\leq i\leq3$, $0\leq j\leq4$) and $\theta_{21+i+5j}=\theta^{\mathrm v}_{i,j}$ ($0\leq i\leq4$, $0\leq j\leq3$): horizontal bonds first, row by row from bottom to top, then vertical bonds ordered by the row of their lower endpoint. This numbering is unrelated to the six illustrative angles of Fig.~\ref{fig:mapping}. Every $X$ face carries the component-grid label $(u,v)$ of $S^{X,(C)}_{u,v}$, with blue identifying $C=A$ and red $C=B$; the corresponding auxiliary faces in (b),(c) carry the identical labels of $\widetilde S^{X,(C)}_{u,v}$. Colored physical qubits show the angle sums they receive; black sites marked $0$ carry angle zero. Each auxiliary code receives $18$ angles on $10$ of its $13$ qubits. The four dashed bonds $\{q_{0,0},q_{0,1}\}$, $\{q_{0,2},q_{0,3}\}$, $\{q_{4,1},q_{4,2}\}$ and $\{q_{4,3},q_{4,4}\}$ are weight-two $Z$-stabilizers and contribute only the phase $\chi=\theta_{21}+\theta_{30}+\theta_{31}+\theta_{40}$.}
\label{fig:mappingd5}
\end{figure*}

By Lemma~\ref{lem:syndromeiqp}, the syndrome distribution of a CSS code prepared in $\ket{\overline{0}}$ is the output distribution of the IQP circuit $W=\prod_{\Gamma\in\cI}e^{i\theta_\Gamma Z(\partial\Gamma)}$ on the $r_X$-qubit syndrome register. If every support changes the eigenvalue of at most two $X$-stabilizers, Eq.~\eqref{eq:effhamiltonian} gives
\begin{align}
 \begin{aligned}
 W&=e^{i\chi}\prod_{\{\xi,\eta\}\in E}e^{iJ_{\{\xi,\eta\}}Z_\xi Z_\eta}
 \prod_{\xi\in V'}e^{ih_\xi Z_\xi}\ ,\\
 J_{\{\xi,\eta\}}&=\sum_{\partial\Gamma=\{\xi,\eta\}}\theta_\Gamma\ ,\quad
 h_\xi=\sum_{\partial\Gamma=\{\xi\}}\theta_\Gamma\ ,\quad
 \chi=\sum_{\partial\Gamma=\emptyset}\theta_\Gamma\ ,
 \end{aligned}
 \label{eq:syndromeiqpising}
\end{align}
where $Z_\xi$ is Pauli $Z$ on the syndrome qubit~$\xi$, the sums run over $\Gamma\in\cI$, the edge set $E$ consists of the two-element syndromes and the field-vertex set $V'\subseteq V$ of the one-element syndromes $\partial\Gamma$, $\Gamma\in\cI$. We call $\mathcal G=(V,E)$, with $V$ the set of $X$-stabilizers, the interaction graph. Since each support has exactly one syndrome, the sums in Eq.~\eqref{eq:syndromeiqpising} run over pairwise disjoint subsets of~$\cI$, so arbitrary couplings $J$ and fields $h$ on $\mathcal G$ are realized by giving one support with the corresponding syndrome the desired angle and setting all other angles to zero; this is used in Appendix~\ref{app:hardness}. It remains to list the syndromes of single qubits and of edges, which we do in coordinates.

For the rotated surface code of odd distance $d=2D-1$, let $\xi^{(C)}_{u,v}$, $(u,v)\in G$, denote the syndrome qubit of the stabilizer $S^{X,(C)}_{u,v}$ of Eqs.~\eqref{eq:bulkxchecks} and~\eqref{eq:boundaryxchecks}, and let $V^{(C)}=\{\xi^{(C)}_{u,v}:(u,v)\in G\}$. Geometrically, the face of $S^{X,(C)}_{u,v}$ in Fig.~\ref{fig:surface_codes}(a) has lower-left corner
\begin{align}
 \rho(\xi^{(A)}_{u,v})=(2u,2v)\ ,\qquad
 \rho(\xi^{(B)}_{u,v})=(2u+1,2v-1)\ ,
 \label{eq:geocomponents}
\end{align}
which give the lower-left corner $(x,y)$ of the face of the stabilizer in Fig.~\ref{fig:surface_codes}(a): the face contains the qubits $q_{i,j}$ with $i\in\{x,x+1\}$ and $j\in\{y,y+1\}$, and the faces on the bottom and top boundaries have $y=-1$ and $y=d-1$, respectively. The map $\rho$ is a bijection from $V^{(A)}\sqcup V^{(B)}$ onto
\begin{align}
 \begin{aligned}
 \cX_d^{\mathrm{geo}}=\{(x,y)\in\mathbb Z^2:\ &0\leq x\leq d-2\ ,\\
 &-1\leq y\leq d-1,\ x+y\text{ even}\}\ ,
 \end{aligned}
 \label{eq:geofaces}
\end{align}
with $V^{(A)}$ the faces of even $x$ and $V^{(B)}$ those of odd~$x$. For the standard (unrotated) surface code of distance $D$, let $\xi_{u,v}$, $0\leq u\leq D-2$, $0\leq v\leq D-1$, denote the syndrome qubit of the $X$-stabilizer with grid label $(u,v)$, as in Fig.~\ref{fig:mapping}(b),(c). The $m=D^2+(D-1)^2$ physical qubits are $h_{u,v}$, $0\leq u,v\leq D-1$, lying between the faces $(u-1,v)$ and $(u,v)$, and $w_{u,v}$, $0\leq u,v\leq D-2$, lying between the faces $(u,v)$ and $(u,v+1)$, so that
\begin{align}
 \partial\{h_{u,v}\}=\{\xi_{u-1,v},\xi_{u,v}\}\ ,\qquad
 \partial\{w_{u,v}\}=\{\xi_{u,v},\xi_{u,v+1}\}\ ,
 \label{eq:auxstandardsyndromes}
\end{align}
where the undefined symbols $\xi_{-1,v}$ and $\xi_{D-1,v}$ are to be omitted: the qubits $h_{0,v}$ and $h_{D-1,v}$ form the two rough boundaries and flip a single stabilizer each. For $D\geq3$ these $m$ syndromes are distinct and nonempty, and they exhaust the edges of the $(D-1)\times D$ grid graph on $\{\xi_{u,v}\}$ and the vertices of its two boundary columns $u=0$ and $u=D-2$.

Figure~\ref{fig:mappingd5} illustrates the complete angle assignment of Eq.~\eqref{eq:auxangles} for $d=5$ and $D=3$.

\begin{appendixlemma}[Surface-code interaction graphs]
\label{lem:surfacecodeiqp}
Let $d\geq5$ be odd and $D=(d+1)/2$.
\begin{enumerate}[(a)]
\item On the rotated surface code $\Crotated{d}$, in the coordinates of Eq.~\eqref{eq:geofaces},
\begin{align}
 \partial\{q_{i,j}\}&=\cX_d^{\mathrm{geo}}\cap\bigl(\{i-1,i\}\times\{j-1,j\}\bigr)\ ,
 \label{eq:sqgeosyndrome}\\
 \partial\{q_{i,j},q_{i+1,j}\}&=\cX_d^{\mathrm{geo}}\cap\bigl(\{i-1,i+1\}\times\{j-1,j\}\bigr)\ ,
 \label{eq:hpairgeo}\\
 \partial\{q_{i,j},q_{i,j+1}\}&=\cX_d^{\mathrm{geo}}\cap\bigl(\{i-1,i\}\times\{j-1,j+1\}\bigr)\ .
 \label{eq:vpairgeo}
\end{align}
Thus a single qubit flips two faces at displacement $(1,\pm1)$, or one face in the left column $x=0$ or right column $x=d-2$ if $i=0$ or $i=d-1$; an edge flips two faces at displacement $(2,0)$ or $(0,2)$, one face in the column $x=1$ or $x=d-3$ if it is horizontal with $i=0$ or $i=d-2$, or no face if its operator $Z(e)$ is one of the $d-1$ weight-two $Z$-stabilizers. Every such set of faces is the syndrome of some qubit or edge.
\item The interaction graphs of Eq.~\eqref{eq:syndromeiqpising} are as follows.
\begin{enumerate}[(i)]
\item Crosstalk noise $U_{\mathrm{nn}}(\bm\theta)$ on $\Crotated{d}$: $\mathcal G=\mathcal G^{(A)}\sqcup\mathcal G^{(B)}$, where $\mathcal G^{(C)}$ is the $(D-1)\times D$ grid graph on $V^{(C)}$, with edges $\{\xi^{(C)}_{u,v},\xi^{(C)}_{u',v'}\}$ for $|u-u'|+|v-v'|=1$, and the field vertices are the right column $u=D-2$ of $\mathcal G^{(A)}$ and the left column $u=0$ of $\mathcal G^{(B)}$ (Fig.~\ref{fig:pairiqp}(a)); $\chi$ is the sum of the angles of the $d-1$ weight-two $Z$-stabilizers, and
\begin{align}
 W=e^{i\chi}\,W^{(A)}\otimes W^{(B)}\ ,
 \label{eq:pairising}
\end{align}
with $W^{(C)}$ the product of the factors supported on $V^{(C)}$.
\item Single-qubit noise $U_{\mathrm{sq}}(\bm\varphi)$ on $\Cstandard{D}$: $\mathcal G$ is the $(D-1)\times D$ grid graph on $\{\xi_{u,v}\}$, with field vertices on both columns $u=0$ and $u=D-2$, and $\chi=0$ (Fig.~\ref{fig:pairiqp}(b)). The map $\xi^{(C)}_{u,v}\mapsto\xi_{u,v}$ is an isomorphism of $\mathcal G^{(C)}$ in case~(i) onto $\mathcal G$ which maps the field vertices of $\mathcal G^{(A)}$ onto the right column $u=D-2$ and those of $\mathcal G^{(B)}$ onto the left column $u=0$.
\item Combined noise $U_{\mathrm{sq}}(\bm\varphi)U_{\mathrm{nn}}(\bm\theta)$ on $\Crotated{d}$: identifying $V$ with $\cX_d^{\mathrm{geo}}$ via $\rho$,
\begin{align}
 \begin{aligned}
 E&=E_{\mathrm{sq}}\cup E_{\mathrm{nn}}\ ,\\
 E_{\mathrm{sq}}&=\bigl\{\{\xi,\eta\}:\xi-\eta\in\{\pm(1,1),\pm(1,-1)\}\bigr\}\ ,\\
 E_{\mathrm{nn}}&=\bigl\{\{\xi,\eta\}:\xi-\eta\in\{\pm(2,0),\pm(0,2)\}\bigr\}\ ,\\
 V'&=\{(x,y)\in\cX_d^{\mathrm{geo}}:x\in\{0,1,d-3,d-2\}\}\ ,
 \end{aligned}
 \label{eq:combinedising}
\end{align}
with $\chi$ as in case~(i). In the coordinates $(a,b)=((x+y)/2,(x-y)/2)$, the edges of $E_{\mathrm{sq}}$ are axial and those of $E_{\mathrm{nn}}$ are diagonal: $\mathcal G$ is the subgraph of the king lattice induced on $\cX_d^{\mathrm{geo}}$ (Fig.~\ref{fig:kingiqp}). The edges of $E_{\mathrm{nn}}$ lie within $V^{(A)}$ or within $V^{(B)}$, whereas every edge of $E_{\mathrm{sq}}$ joins $V^{(A)}$ to $V^{(B)}$.
\end{enumerate}
\item Writing $\theta^{\mathrm h}_{i,j}=\theta_{\{q_{i,j},q_{i+1,j}\}}$ and $\theta^{\mathrm v}_{i,j}=\theta_{\{q_{i,j},q_{i,j+1}\}}$, the single-qubit angles assigned to the auxiliary codes by Eq.~\eqref{eq:anglesum} are
\begin{align}
 \begin{aligned}
 \varphi^{(A)}_{h_{u,v}}&=\theta^{\mathrm h}_{2u-1,2v}+\theta^{\mathrm h}_{2u-1,2v+1}\ ,
 &&1\leq u\leq D-1\ ,\\
 \varphi^{(A)}_{w_{u,v}}&=\theta^{\mathrm v}_{2u,2v+1}+\theta^{\mathrm v}_{2u+1,2v+1}\ ,\\
 \varphi^{(B)}_{h_{u,v}}&=\theta^{\mathrm h}_{2u,2v-1}+\theta^{\mathrm h}_{2u,2v}\ ,
 &&0\leq u\leq D-2\ ,\\
 \varphi^{(B)}_{w_{u,v}}&=\theta^{\mathrm v}_{2u+1,2v}+\theta^{\mathrm v}_{2u+2,2v}\ ,\\
 \varphi^{(A)}_{h_{0,v}}&=\varphi^{(B)}_{h_{D-1,v}}=0\ ,
 \end{aligned}
 \label{eq:auxangles}
\end{align}
with $0\leq u,v\leq D-2$ in the second and fourth lines and $0\leq v\leq D-1$ otherwise, where a term with a qubit index outside $\{0,\ldots,d-1\}$ is omitted.
\end{enumerate}
\end{appendixlemma}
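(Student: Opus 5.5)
The statement to prove is Lemma~\ref{lem:surfacecodeiqp}. The plan is a direct coordinate computation, organized in three stages.

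\emph{Part (a).} First I compute the syndromes of single qubits and edges. By Eq.~\eqref{eq:geocomponents}, the face with lower-left corner $(x,y)\in\cX_d^{\mathrm{geo}}$ contains exactly the qubits $q_{i,j}$ with $i\in\{x,x+1\}$ and $j\in\{y,y+1\}$, intersected with the array. This includes the boundary faces at $y=-1$ and $y=d-1$, which have weight two.

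Hence $Z_{i,j}$ anticommutes with the face $(x,y)$ exactly when $x\in\{i-1,i\}$ and $y\in\{j-1,j\}$. This gives Eq.~\eqref{eq:sqgeosyndrome}.

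For an edge, the syndrome is the symmetric difference of the two single-qubit syndromes.
\begin{itemize}
\item For a horizontal edge, $\{i-1,i\}\triangle\{i,i+1\}=\{i-1,i+1\}$ in the first coordinate, with the same $y$-range. This gives Eq.~\eqref{eq:hpairgeo}.
\item Vertical edges are handled the same way and give Eq.~\eqref{eq:vpairgeo}.
\end{itemize}

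The case analysis then follows from the parity constraint $x+y$ even together with the range $0\le x\le d-2$.
\begin{itemize}
\item Among the four candidate faces $\{i-1,i\}\times\{j-1,j\}$, exactly two have $x+y$ even, and they differ by $(1,\pm1)$. One of them is lost only when $x=-1$ or $x=d-1$, that is, when $i=0$ or $i=d-1$.
\item For edges, the two surviving candidates differ by $(2,0)$ or $(0,2)$. For a horizontal edge, one is lost when $i=0$ or $i=d-2$, leaving a face in column $x=1$ or $x=d-3$. For a vertical edge with $i\in\{0,d-1\}$, both candidates may lie outside the column range, giving an empty syndrome. I will check that these are precisely the $d-1$ weight-two $Z$-stabilizers, using the parity of $i+j$.
\end{itemize}
Surjectivity onto the listed sets follows by exhibiting, for each pair or single face, the qubit or edge whose box contains it.

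\emph{Part (b).} I insert (a) into Eq.~\eqref{eq:syndromeiqpising}.
\begin{itemize}
\item For case (iii), take the union of the two edge types. The change of variables $(a,b)=((x+y)/2,(x-y)/2)$ maps $(1,\pm1)$ to unit axial steps and $(2,0),(0,2)$ to $(1,1),(1,-1)$, i.e.\ diagonal steps. Parity of $x$ distinguishes $A$ from $B$. Since $(1,\pm1)$ changes the parity of $x$ while $(2,0),(0,2)$ preserve it, $E_{\mathrm{sq}}$ crosses between the components and $E_{\mathrm{nn}}$ stays within one.
\item For case (i), restrict to $E_{\mathrm{nn}}$. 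Pulling back through $\rho^{-1}$, a displacement of $(2,0)$ is $u\mapsto u+1$ and $(0,2)$ is $v\mapsto v+1$, so each component is the $(D-1)\times D$ grid. The field faces in column $x=1$ or $x=d-3$ are, respectively, $B$ with $u=0$ and $A$ with $u=D-2$ (using $d-3=2(D-2)$). The tensor factorization Eq.~\eqref{eq:pairising} holds because no coupling crosses components.
\item Case (ii) is read off directly from Eq.~\eqref{eq:auxstandardsyndromes}.
\item The isomorphism claim is then immediate from comparing grid labels.
\end{itemize}

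\emph{Part (c).} For each edge, I compute the pair or single face from (a) and convert it to the labels $(C,u,v)$. I then find the auxiliary qubit $h_{u,v}$ or $w_{u,v}$ with the same syndrome via Eq.~\eqref{eq:auxstandardsyndromes}.

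For example, take the horizontal edge $(2u-1,j)$ with $j\in\{2v,2v+1\}$. It flips $x\in\{2u-2,2u\}$, which are the $A$ faces $(u-1,v),(u,v)$, so it maps to $h_{u,v}$ in $A$. The other three families are checked in the same way.

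The main obstacle is bookkeeping at the boundaries. This includes the top and bottom weight-two $X$ faces, the index shift for component $B$, and verifying that the omitted out-of-range terms and the zero-angle rough-boundary qubits match exactly, including the count of $d-1$ trivial edges. I would do this carefully, with $d=5$ (Fig.~\ref{fig:mappingd5}) as a sanity check.
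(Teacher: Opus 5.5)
Your proposal is correct and takes essentially the same route as the paper's proof. Both compute the single-qubit and edge syndromes directly from face membership, then prune them with the parity and column-range constraints of $\cX_d^{\mathrm{geo}}$, including the boundary $Z$-stabilizer check via the parity of $j$. Both then pass to the king coordinates, where the parity of $x$ separates the components, and obtain part (c) by substituting Eq.~\eqref{eq:geocomponents} into the edge syndromes, noting that the singletons $\{\xi^{(A)}_{0,v}\}$ and $\{\xi^{(B)}_{D-2,v}\}$ are never edge syndromes. The only difference is cosmetic: you take symmetric differences of single-qubit syndromes, whereas the paper counts the qubits of $\Gamma$ in each face modulo two, which is the same computation.
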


\emph{Proof.} (a) The face $(x,y)$ contains $q_{i,j}$ if and only if $x\in\{i-1,i\}$ and $y\in\{j-1,j\}$, and $Z(\Gamma)$ flips a face if and only if the face contains an odd number of qubits of~$\Gamma$; this gives Eqs.~\eqref{eq:sqgeosyndrome}--\eqref{eq:vpairgeo}. Of the four candidate faces, the parity condition of Eq.~\eqref{eq:geofaces} selects the two at the stated displacement, and the condition $0\leq x\leq d-2$ removes those with $x\in\{-1,d-1\}$. For a vertical edge both selected faces have the same $x$, so either both or none are removed; the syndrome is empty for $\{q_{0,j},q_{0,j+1}\}$ with $j$ even and $\{q_{d-1,j},q_{d-1,j+1}\}$ with $j$ odd, the weight-two $Z$-stabilizers. Conversely, two faces at displacement $(1,\pm1)$ share a qubit, two faces at displacement $(2,0)$ or $(0,2)$ are separated by an edge, and a face in the column $x=0$, $1$, $d-3$ or $d-2$ contains a qubit $q_{0,j}$, $q_{1,j}$, $q_{d-2,j}$ or $q_{d-1,j}$, respectively.

(b) By Eq.~\eqref{eq:syndromeiqpising}, $E$ and $V'$ are the sets of two- and one-element syndromes, which are listed in (a) and in Eq.~\eqref{eq:auxstandardsyndromes}. In case~(i), the displacements $(2,0)$ and $(0,2)$ preserve the parity of $x$, hence the component, and by Eq.~\eqref{eq:geocomponents} correspond to $(u,v)\mapsto(u\pm1,v)$ and $(u,v)\mapsto(u,v\pm1)$; the columns $x=d-3$ and $x=1$ consist of the faces $\rho(\xi^{(A)}_{D-2,v})$ and $\rho(\xi^{(B)}_{0,v})$. Since no factor of Eq.~\eqref{eq:syndromeiqpising} couples $V^{(A)}$ to $V^{(B)}$, Eq.~\eqref{eq:pairising} follows. In case~(iii), the map $(x,y)\mapsto(a,b)$ is injective on $\cX_d^{\mathrm{geo}}$ and sends the displacements $(1,1),(1,-1),(2,0),(0,2)$ to $(1,0),(0,1),(1,1),(1,-1)$; the displacements $(1,\pm1)$ change the parity of~$x$.

(c) By Eq.~\eqref{eq:pairgroups}, the edges mapped to a qubit $q$ of the auxiliary code $C$ are those whose syndrome is $\partial\{q\}$ of Eq.~\eqref{eq:auxstandardsyndromes}, read in component~$C$. Substituting Eq.~\eqref{eq:geocomponents} into Eqs.~\eqref{eq:hpairgeo} and~\eqref{eq:vpairgeo} lists these edges. The singletons $\{\xi^{(A)}_{0,v}\}$ and $\{\xi^{(B)}_{D-2,v}\}$ are not syndromes of edges, so $h^{(A)}_{0,v}$ and $h^{(B)}_{D-1,v}$ receive angle zero.
\hfill$\square$

Appendix~\ref{app:hardness} embeds a rectangular king-lattice layout into case~(iii) by setting the angles of all supports outside the layout to zero. If $W=e^{iK_Q}\otimes I_{Q^{\mathrm c}}$ acts trivially on a set $Q^{\mathrm c}$ of syndrome qubits, then
\begin{align}
 \HH^{\otimes r_X}\,W\ket+^{\otimes r_X}
 =\bigl(\HH^{\otimes|Q|}e^{iK_Q}\ket+^{\otimes|Q|}\bigr)
 \otimes\ket0^{\otimes|Q^{\mathrm c}|}\ ,
 \label{eq:iqpinactivesites}
\end{align}
so the outcomes on $Q^{\mathrm c}$ are deterministically zero and the distribution on $Q$ is that of the smaller register.

\begin{figure*}[!tbp]
\centering
\begin{minipage}[t]{184pt}
\centering
\subfloat[{Rotated surface code, $d=7$, with nearest-neighbor $ZZ$ crosstalk.\label{fig:pairiqp-rotated}}]{\begin{minipage}{\linewidth}
 \centering
 \includegraphics{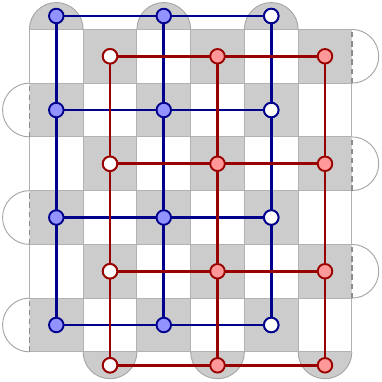}
\end{minipage}}
\end{minipage}\hfill
\begin{minipage}[t]{150pt}
\centering
\subfloat[{Auxiliary standard (unrotated) surface code $A$, $D=4$, with
          single-qubit $Z$ rotations.\label{fig:pairiqp-auxA}}]{\begin{minipage}{\linewidth}
 \centering
 \includegraphics{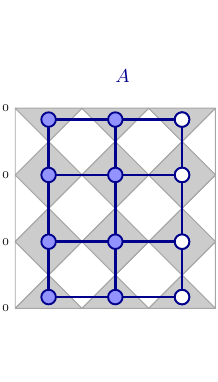}
\end{minipage}}
\end{minipage}\hfill
\begin{minipage}[t]{150pt}
\centering
\subfloat[{Auxiliary standard (unrotated) surface code $B$, $D=4$, with
          single-qubit $Z$ rotations.\label{fig:pairiqp-auxB}}]{\begin{minipage}{\linewidth}
 \centering
 \includegraphics{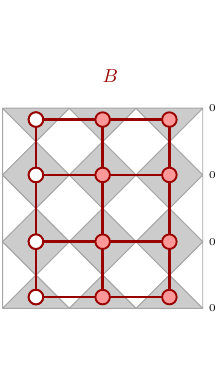}
\end{minipage}}
\end{minipage}
\caption{The same IQP circuit from two noise models. Nearest-neighbor $ZZ$
crosstalk on the rotated code of distance $d=7$ in (a) induces two
disconnected syndrome lattices, blue for $A$ and red for $B$; single-qubit
$Z$ rotations on the two auxiliary standard (unrotated) surface codes $A$ and
$B$ of distance $D=4$ in (b) and (c) induce the same lattices. Circles
represent $X$-stabilizer syndrome qubits, outlined in blue for $A$ and red for
$B$; a solid fill in the component color means no one-body field, and a white
fill marks a one-body field. A thick colored line joins the two circles of a
two-body coupling. Grey faces are $X$-stabilizers and white faces are $Z$-stabilizers. Grey dashed bonds in (a) are the physical edges with trivial syndrome,
which contribute only the scalar $e^{i\chi}$, and crossings without circles
are not vertices. Each component has $12$ vertices, $17$ edges, and $4$ field
vertices. Corresponding vertices carry the same component $C$ and grid index
$(u,v)$, with indices omitted for clarity. Assigning the auxiliary angles by
Eq.~\eqref{eq:auxangles} matches every edge coupling and field. The opposite auxiliary rough
boundaries carry zero rotation angles, marked $0$.}
\label{fig:pairiqp}
\end{figure*}

\subsection{Proof of Lemma~\ref{lem:mapping}}
\label{app:mappingproof}

Figure~\ref{fig:pairiqp} compares the syndrome-register IQP circuits: crosstalk on the rotated surface code in (a) and single-qubit noise on the two auxiliary standard (unrotated) surface codes in (b) and (c). We show that their Hamiltonians agree under the stabilizer correspondence of Sec.~\ref{sec:reduction}, up to the scalar from the edges with trivial syndrome. We first work at logical-zero inputs. For crosstalk, Eq.~\eqref{eq:iqpgeneral} becomes
\begin{align}
 \begin{aligned}
 A(s)&=\bra s\HH^{\otimes r_X}W_{\mathrm{nn}}\ket+^{\otimes r_X}\ ,\\
 H_{\mathrm{nn}}&=\sum_{e\in\cE}\theta_eZ(\partial e)\ ,
 \qquad W_{\mathrm{nn}}=e^{iH_{\mathrm{nn}}}\ .
 \end{aligned}
 \label{eq:nniqp}
\end{align}
In Fig.~\ref{fig:pairiqp}(a), the blue and red lattices are disconnected. This is the decomposition of Eq.~\eqref{eq:pairising}: by Lemma~\ref{lem:surfacecodeiqp}(b)(i), every nonzero $\partial e$ is an edge or a field vertex of exactly one component graph $\mathcal G^{(C)}$, $C\in\{A,B\}$. Under the stabilizer correspondence of Sec.~\ref{sec:reduction}, its restriction $(\partial e)^{(C)}$ equals $\partial^{(C)}\{q\}$ for a physical qubit $q$ of the auxiliary code. For $D\geq3$, the single-qubit syndromes in Eq.~\eqref{eq:auxstandardsyndromes} are distinct and nonzero, so $q$ is unique. For each of the remaining $d-1$ edges, $Z(e)$ is one of the weight-two $Z$-stabilizers on the left and right boundaries and has trivial syndrome.

Order the syndrome register by $s=(s^{(A)},s^{(B)})$. Each block contains $|G|=D(D-1)=r_X/2$ qubits. Grouping physical edges with the same auxiliary image now gives the Hamiltonian identity
\begin{align}
 \begin{aligned}
 H_{\mathrm{nn}}
 &=\chi I+H^{(A)}\otimes I^{(B)}
              +I^{(A)}\otimes H^{(B)}\ ,\\
 H^{(C)}
 &=\sum_q\left[\sum_{e:\ \iota(e)=(C,q)}\theta_e\right]
                   Z^{(C)}(\partial^{(C)}\{q\})\\
 &=\sum_q\varphi_q^{(C)}Z^{(C)}(\partial^{(C)}\{q\})\ ,
 \qquad
 \chi=\sum_{e:\partial e=0}\theta_e\ .
 \end{aligned}
 \label{eq:hefffactor}
\end{align}
Here $Z^{(C)}$ acts on the syndrome register of component $C$. In Fig.~\ref{fig:pairiqp}(b), an interior auxiliary qubit connects the two syndrome vertices corresponding to the $X$-stabilizers it touches, while an active rough-boundary qubit gives a field on one syndrome vertex. Eq.~\eqref{eq:auxstandardsyndromes} gives exactly the edges and field vertices of (a), under the stabilizer correspondence. The zero angles on the opposite rough boundaries remove the unused fields. Applying Eq.~\eqref{eq:encodednoise} to $U^{(C)}$ at input $\ket{\overline{0}}^{(C)}$ therefore gives $H^{(C)}$: Eq.~\eqref{eq:anglesum} matches the coefficient of each term as well as its support. Thus the two noise models realize the same weighted IQP circuit, up to $e^{i\chi}$.

Exponentiating Eq.~\eqref{eq:hefffactor} yields
\begin{align}
 \begin{aligned}
 W_{\mathrm{nn}}&=e^{i\chi}(W^{(A)}\otimes W^{(B)})\ ,\\
 W^{(C)}&=e^{iH^{(C)}}
     =\prod_qe^{i\varphi_q^{(C)}Z^{(C)}(\partial^{(C)}\{q\})}\ .
 \end{aligned}
 \label{eq:wfactor}
\end{align}
Applying Lemma~\ref{lem:syndromeiqp} and the amplitude identity~\eqref{eq:iqpbranches} to auxiliary code $C$ at input $\ket{\overline{0}}^{(C)}$ gives, with $p_{\ket{\overline{0}}}^{(C)}$ abbreviated as $p^{(C)}$,
\begin{align}
 \begin{aligned}
 A^{(C)}(s^{(C)})&=\bra{s^{(C)}}\HH^{\otimes r_X/2}W^{(C)}\ket+^{\otimes r_X/2}\ ,\\
 p^{(C)}(s^{(C)})&=|A^{(C)}(s^{(C)})|^2\ .
 \end{aligned}
 \label{eq:auxiqp}
\end{align}
The input $\ket+^{\otimes r_X}$ and the $X$-basis measurement factor over the two registers, so
\begin{align}
 A(s^{(A)},s^{(B)})=e^{i\chi}A^{(A)}(s^{(A)})A^{(B)}(s^{(B)})\ .
 \label{eq:ampfactor}
\end{align}
Taking the squared modulus proves Eq.~\eqref{eq:factor} at logical-zero inputs.

State independence of the left-hand side of Eq.~\eqref{eq:factor} follows from Lemma~\ref{lem:logicalaction}. For the right-hand side, let $\mathcal B_A$ and $\mathcal B_B$ denote the left rough boundary of $\Caux{A}$ and the right rough boundary of $\Caux{B}$, respectively. The assigned angles $\varphi_q^{(C)}$ vanish on $\mathcal B_C$, so we choose the logical representatives
\begin{equation}
 \overline X^{(C)}=\prod_{q\in\mathcal B_C}X_q^{(C)},
 \quad C\in\{A,B\}.
 \label{eq:auxlogicalx}
\end{equation}
Each commutes with every active single-qubit rotation, giving $\ell(\Gamma)=0$ for all active supports and hence $W_{\ket{\overline{1}}}=W_{\ket{\overline{0}}}$ in Lemma~\ref{lem:syndromeiqp}. Both factors on the right-hand side are therefore state independent, for either parity of $D$.

Independent auxiliary samples consequently produce exactly the required syndrome of the original code for every encoded input. This proves Lemma~\ref{lem:mapping}.

\begin{figure}[!t]
\centering
\includegraphics{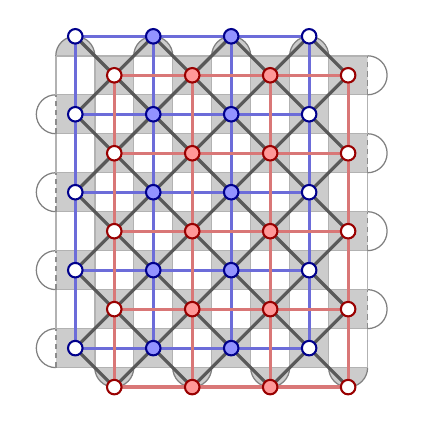}
\caption{Combined-noise interaction graph at $d=9$, with the same color and marker conventions as Fig.~\ref{fig:pairiqp}. Thin dark edges come from single-qubit rotations and colored edges from $ZZ$-rotations. The graph has $40$ vertices, $63$ single-qubit edges, $62$ $ZZ$-edges, and $20$ field vertices; the eight dashed bonds contribute only a scalar phase. The single-qubit edges join $A$ and $B$. In the geometric face coordinates $(x,y)$, the edges have displacements $(1,\pm1)$, $(2,0)$, or $(0,2)$, up to sign; the graph is isomorphic to a finite subgraph of the king lattice.}
\label{fig:kingiqp}
\end{figure}

Figure~\ref{fig:kingiqp} shows why the same factorization does not apply to combined noise: the single-qubit terms join the two components of Fig.~\ref{fig:pairiqp}. Their union gives the king-lattice interaction graph of Eq.~\eqref{eq:combinedising}, used in Sec.~\ref{sec:hardness}.

\section{Numerical methods and threshold extrapolation}
\label{app:thresholdfit}

\emph{Sampling and decoding.}
For every distance, angle and model of Sec.~\ref{sec:numerics}, we use $N=200\,000$ independent samples, with separate random streams for the two auxiliary codes and for the physical twirl. Coherent samples use logical-zero auxiliary inputs, the angle sums of Eq.~\eqref{eq:anglesum}, and sequential FLO sampling of physical $X$ outcomes, whose products give the auxiliary stabilizer syndromes. For the twirl, each edge of the original code is selected independently with probability $\sin^2\theta$; repeated $Z$ operators on a qubit cancel before its syndrome is computed.

Both models use the same MWPM decoder on the single-qubit $Z$-error graph. Each physical qubit contributes an edge between its adjacent $X$ checks, or an edge to a rough boundary if it touches only one check. All physical edges have unit weight, including boundary edges; matching costs are shortest-path lengths. Defects are paired with each other or with the nearest rough boundary to minimize the total cost. Equal-cost choices are resolved deterministically using a fixed ordering of checks and qubits, identical for both models. The correction is the symmetric difference of the selected paths, and its syndrome is checked against the sampled syndrome before its weight parity is recorded. No even-weight constraint or pair-noise reweighting is imposed.

\emph{Numerical precision.}
FLO covariance updates use binary64 arithmetic. Probability bounds and normalization identities are checked with absolute tolerances $2\times10^{-10}$ and $2\times10^{-9}$, respectively. A failed check or a covariance update requiring division by a conditional probability below $10^{-14}$ triggers a repeat of the entire auxiliary sample with 113-bit precision and the same random draws. The corresponding tolerances are $10^{-24}$, $10^{-22}$ and $10^{-28}$. Small negative weights within tolerance are set to zero. Validation includes exact small-instance probabilities and comparisons using the same random draws at both precisions through $d=37$. These checks address numerical stability; the intervals below quantify Monte Carlo sampling uncertainty.

\emph{Sampling uncertainty.}
We estimate $f$ and $P^L$ as in Eq.~\eqref{eq:estimator}. The binomial standard error of $\widehat P^L$ is $2\sqrt{\hat f(1-\hat f)/N}$. Plotted intervals are pointwise 95\% Wilson intervals for $f$, doubled for $P^L$; the display cutoff in Fig.~\ref{fig:numerics} does not remove observations from the fits. These intervals quantify sampling uncertainty at each point, rather than uncertainty in the extrapolated threshold.

\emph{Joint scaling fit.}
For coherent noise we fit $0.032\leq t\leq0.038$; for the physical twirl we fit $0.047\leq t\leq0.053$, where $t=\theta/\pi$. Each window contains seven angles at each of the four distances, hence $28$ observations. We fit Eq.~\eqref{eq:thresholdscaling} by minimizing
\begin{align}
 \chi^2&=\sum_i\frac{[\hat f_i-A-Bx_i-Cx_i^2]^2}{\sigma_i^2},\nonumber\\
 \sigma_i^2&=\frac{q_i(1-q_i)}{N_i},\qquad
 q_i=\frac{M_i+1/2}{N_i+1},
 \label{eq:fitweights}
\end{align}
where $i$ labels distance--angle points, $M_i$ is the observed failure count among $N_i$ samples, and $x_i=(t_i-t_c)d_i^{1/\nu}$. The smoothed binomial variance keeps the weights finite even for zero observed failures. For each trial $(t_c,\nu)$, the unrestricted coefficients $A,B,C$ are obtained by linear weighted least squares. The remaining minimization bounds $t_c$ to its fit window and $0.2\leq\nu\leq10$, and uses $20$ starting points: five window fractions $0.15,0.35,0.50,0.65,0.85$, each with $\nu=0.5,0.8,1.5,3$. We retain the converged fit with the smallest $\chi^2$. No correction-to-scaling term is included; $\chi^2_{\mathrm{red}}=\chi^2/(28-5)$. Figure~\ref{fig:thresholdfit} shows the joint fits reported in Table~\ref{tab:threshold}.

\emph{Distance and window sensitivity.}
We repeat the complete five-parameter fit four times, omitting one distance at a time. If $t_c^{(-j)}$ denotes the result with distance $j$ omitted, the quoted spread is
\begin{align}
 s_{\mathrm{JK}}&=\left[\frac{m-1}{m}\sum_{j=1}^m
 (t_c^{(-j)}-\bar t_c)^2\right]^{1/2},\nonumber\\
 \bar t_c&=\frac1m\sum_{j=1}^m t_c^{(-j)},\qquad m=4.
 \label{eq:jackknife}
\end{align}
Omitting $d=13,21,29,37$ in that order gives $t_c=0.034525,0.034123,0.034198,0.033891$ for coherent noise and $0.049825,0.049265,0.049391,0.048964$ for the twirl. Thus $s_{\mathrm{JK}}=0.000394$ and $0.000536$, respectively. Narrowing both window endpoints inward by $0.001$ gives $t_c=0.034213$ and $0.049320$; widening them outward by $0.001$ gives $0.034146$ and $0.049341$. The ordering is unchanged in these checks. The jackknife spread is a measure of distance sensitivity, not a confidence interval accounting for scaling-model bias, which is indicated by the large $\chi^2_{\mathrm{red}}$ values.

\begin{figure*}[!tbp]
\centering
\includegraphics[width=\textwidth]{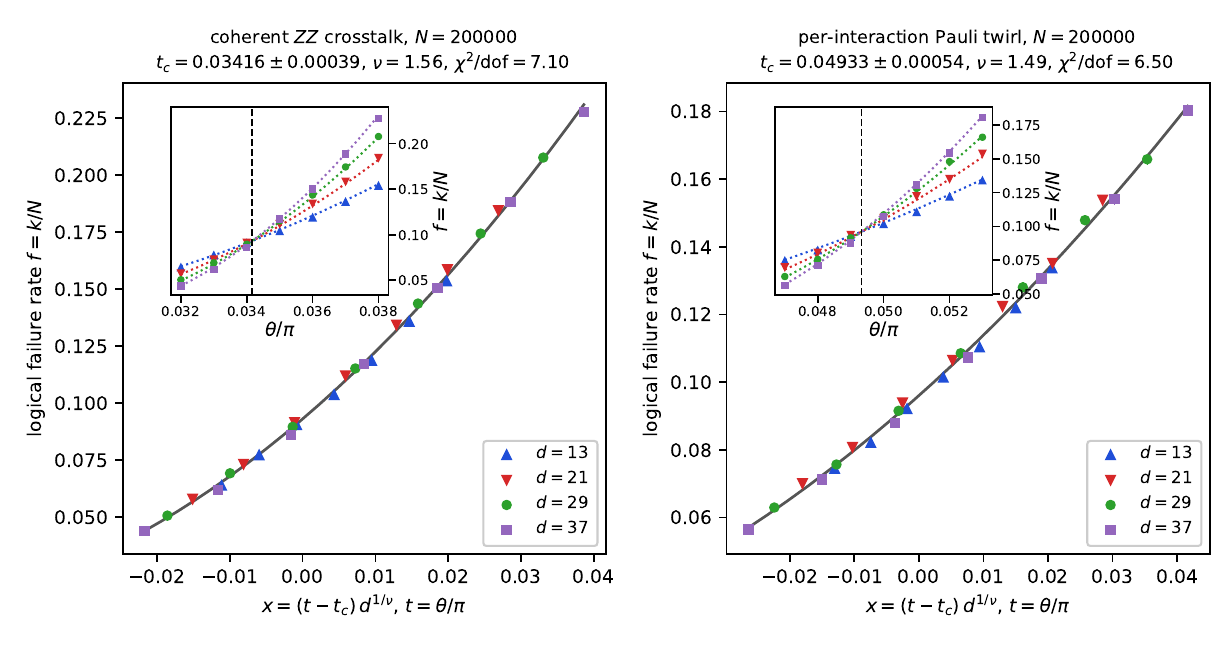}
\caption{Finite-size-scaling collapse for coherent noise (left) and
its physical twirl (right). Main panels: the measured failure rate
$\hat f=M/N$ for $d=13,21,29,37$ against the scaling variable
$x=(t-t_c)d^{1/\nu}$, with the fitted parabola $A+Bx+Cx^2$ of
Eq.~\eqref{eq:thresholdscaling} (gray). Insets: the same data before
rescaling, against $t=\theta/\pi$, over the fitted angle windows,
with the dashed line at the extrapolated $t_c$. Each point uses $200\,000$
samples; error bars show pointwise 95\% Wilson intervals for $f$
and are generally smaller than the markers. Panel titles quote $t_c$ with the jackknife spread
$s_{\mathrm{JK}}$ of Eq.~\eqref{eq:jackknife}, together with $\nu$ and
$\chi^2_{\mathrm{red}}$.}
\label{fig:thresholdfit}
\end{figure*}

\section{Postselected universality on the king lattice}
\label{app:hardness}

\subsection{The sampling problem}
The king lattice has vertex set $\mathbb Z^2$ and edges between
distinct sites $u,v$ with $\|u-v\|_\infty=1$. Our conventions are
\[
 \begin{aligned}
  R_j(\alpha)&=e^{i\alpha Z_j}, &
  G_{jk}(\theta)&=e^{i\theta Z_jZ_k}\ ,\\
  \ket{x}_X&=\bigotimes_j
  \frac{\ket0+(-1)^{x_j}\ket1}{\sqrt2}\ .
 \end{aligned}
\]
Thus outcome $0$ means $\ket+$ and outcome $1$ means $\ket-$.
For a finite rectangular region $\Lambda$, define
\begin{align}\label{ghz:eq:model}
 \begin{aligned}
  \mathcal H_J&=\sum_{\{u,v\}\in E(\Lambda)}J_{uv}Z_uZ_v\ ,\\
  U_J&=e^{i\mathcal H_J}\ ,\\
  p_J(x)&=\left|\bra{x}_XU_J
  \ket+^{\otimes |\Lambda|}\right|^2\ .
 \end{aligned}
\end{align}
There are no one-body fields. Couplings are
individually specified; unused king-lattice edges have coupling zero.
Let $\mathcal K$ be the family in Eq.~\eqref{ghz:eq:model} with
\[
 J_{uv}\in\{k\pi/8:k=0,1,\ldots,15\}\ ,
\]
and at most four nonzero couplings at each site.

\subsection{Replace Hadamards and expand controlled-\texorpdfstring{$Z$}{Z} gates}
\label{ghz:sec:iqp}
Start with a nearest-neighbor brickwork circuit $C$ on $q$ wires,
over $H$, $R(\pi/8)$ and $\CZ$, with $\ket+$ inputs and $X$
measurements. Its two-qubit layers consist of disjoint CZ gates
on neighboring wires, with the pairings staggered between layers.
Single-qubit layers can be separated into diagonal rotations and
Hadamards on subsets of wires. This form is universal with polynomial
overhead; boundary Hadamards give the stated input and measurement
conventions.

For a regular layout, append $HH=I$ on every wire after each of
the $T$ layers. After a diagonal layer each wire then has two
Hadamards. In a Hadamard layer it has three if that layer acts on
it, and two otherwise. This padding does not change $C$.
As in the post-IQP construction of Bremner, Jozsa and
Shepherd~\cite{BremnerJozsaShepherd2011}, replace every Hadamard by
\begin{align}\label{ghz:eq:hadamard}
 (\bra+_a\otimes I_b)\CZ_{ab}
 (\ket\psi_a\otimes\ket+_b)
 =2^{-1/2}H\ket\psi_b\ .
\end{align}
The identity also holds when $a$ is entangled with other qubits.
The old qubit $a$ is selected to $+$; subsequent gates on that
logical wire act on the fresh qubit $b$, as shown in
Fig.~\ref{ghz:fig:hadamard}.

\begin{figure*}[ht]
\centering
\begin{tikzpicture}[x=1cm,y=0.75cm,font=\small]
 \draw (0,0)--(2,0);
 \node[left] at (0,0) {$\ket\psi$};
 \node[draw,fill=white] at (1,0) {$H$};
 \draw[-{Latex}] (2.6,-0.45)--(3.5,-0.45);
 \draw (4.5,0)--(7.2,0); \draw (4.5,-1)--(8,-1);
 \node[left] at (4.5,0) {$a:\ket\psi$};
 \node[left] at (4.5,-1) {$b:\ket+$};
 \draw (5.8,0)--(5.8,-1);
 \fill (5.8,0) circle (2pt); \fill (5.8,-1) circle (2pt);
 \node[above] at (5.8,0.12) {$\CZ$};
 \node[draw,fill=white] at (7.2,0) {$\bra+$};
 \node[right] at (8,-1) {$2^{-1/2}H\ket\psi$};
\end{tikzpicture}
\caption{Each replacement makes a fresh IQP qubit represent the
next segment of a logical wire. Time is recorded in the choice of
qubits and their interactions.}\label{ghz:fig:hadamard}
\end{figure*}

After replacing the Hadamards, all gates are diagonal. No later gate
uses a terminated qubit, so its measurement can be deferred to the
end. Prepare all fresh $\ket+$ qubits at the start and expand
\begin{align}\label{ghz:eq:cz}
 \CZ_{ab}=e^{-i\pi/4}R_a(\pi/4)R_b(\pi/4)G_{ab}(-\pi/4)\ .
\end{align}
Ignoring an overall phase, the resulting IQP unitary on $n$ qubits is
\begin{align}\label{ghz:eq:target}
 U=e^{i(F+B)},\quad
 F=\sum_v\alpha_v Z_v,\quad
 B=\sum_{e=\{u,v\}}\beta_e Z_uZ_v\ ,
\end{align}
where $\beta_e=-\pi/4$. Combine all rotations at a vertex into
one angle $\alpha_v\in(\pi/8)\mathbb Z$, taken modulo $2\pi$.
Each Hadamard replacement
contributes the same nonzero scalar $2^{-1/2}$ for every output.
Thus, conditioned on its selected outcomes, this IQP circuit has
exactly the output distribution of $C$.

Place the qubits for each logical wire on one horizontal row, in
their original time order. The Hadamard gadgets form horizontal
chains; the original CZ gates form vertical bonds between adjacent
rows. Each vertex has at most two horizontal bonds and one vertical
bond. This data graph is planar. Its vertices remain distinct after
the gates commute: the original time coordinate has become a second
spatial coordinate.

\subsection{Supply the \texorpdfstring{$Z$}{Z}-rotations with a GHZ reference}
\label{ghz:sec:ghz}
\subsubsection{The basic equality check}
Prepare an auxiliary qubit $a$ in $\ket+$, couple it to $u,v$ with
opposite angles, and select its $X$ outcome to zero. Its Kraus operator is
\begin{align}\label{ghz:eq:parity}
 \begin{split}
 {}_a\bra+G_{ua}(\pi/4)G_{av}(-\pi/4)\ket+_a
 &=\cos\!\left[\frac\pi4(Z_u-Z_v)\right]\\
 &=\frac{I+Z_uZ_v}{2}=:P^+_{uv}\ .
 \end{split}
\end{align}
It retains exactly equal computational-basis bits at $u,v$, without
a phase correction. Apply this check between consecutive reference
qubits $r_1,\ldots,r_m$, using $m-1$ auxiliaries. Then
\[
 P_R=\prod_{j=1}^{m-1}P^+_{r_jr_{j+1}}
 =\ket{0^m}\bra{0^m}+\ket{1^m}\bra{1^m}\ ,
\]
and
\begin{align}\label{ghz:eq:ghz-prep}
 \begin{aligned}
 P_R\ket+^{\otimes m}&=2^{-(m-1)/2}\GHZ{m}\ ,\\
 \GHZ{m}&=\frac{\ket{0^m}+\ket{1^m}}{\sqrt2}\ .
 \end{aligned}
\end{align}
The preparation succeeds with probability $2^{-(m-1)}$.
For $m=1$ there are no checks and $\GHZ1=\ket+$.

\begin{figure*}[ht]
\centering
\begin{tikzpicture}[x=1.55cm,y=1cm,font=\small]
 \foreach \j in {1,2,3,4} {
   \node[site] (r\j) at ({2*(\j-1)},0) {$r_\j$};
   \node[site] (v\j) at ({2*(\j-1)},-1.25) {$v_\j$};
   \draw (r\j)--node[right] {$\alpha_\j$} (v\j);
 }
 \foreach \j in {1,2,3} {
   \pgfmathtruncatemacro{\next}{\j+1}
   \node[check] (a\j) at ({2*\j-1},0) {$a_\j$};
   \draw (r\j)--node[above] {$+\pi/4$} (a\j);
   \draw (a\j)--node[above] {$-\pi/4$} (r\next);
 }
 \node at (3,0.8) {reference chain: all $r_j$ carry the same bit};
 \node at (3,-1.95) {select every $a_j$ to $+$; take the data qubits as the output register};
\end{tikzpicture}
\caption{One GHZ port for each required $Z$-rotation. Lines denote
$ZZ$ gates with the displayed angles. This is an interaction-graph
schematic; Section~\ref{ghz:sec:lattice} supplies a local king-lattice
placement. Extra reference qubits may lie between the ports.}\label{ghz:fig:ghz}
\end{figure*}

\subsubsection{Replacing the one-body rotations}
Choose a reference chain long enough to assign a distinct port
$r_{\iota(v)}$ to every data vertex $v$ (Fig.~\ref{ghz:fig:ghz}). Replace $R_v(\alpha_v)$
by $G_{v r_{\iota(v)}}(\alpha_v)$; a zero angle needs no coupling.
The other reference qubits simply extend the chain between ports.
The interaction is now
\begin{align}\label{ghz:eq:W}
 W=\exp\!\left(iB+i\sum_v
          \alpha_v Z_vZ_{r_{\iota(v)}}\right)\ .
\end{align}
On $\ket{0^m}_R$ it acts as $U_+=e^{i(B+F)}=U$; on $\ket{1^m}_R$
it acts as $U_-=e^{i(B-F)}$. The state after successful GHZ
preparation and these interactions is therefore
\[
 \frac{\ket{0^m}_R U_+\ket+^{\otimes n}
       +\ket{1^m}_R U_-\ket+^{\otimes n}}{\sqrt2}\ .
\]
Take the data qubits as the output register. Its distribution is
the marginal over the reference measurement outcomes, hence the
equal mixture of the two branch distributions.
These have identical $X$-measurement probabilities: with
$X_D=X^{\otimes n}$ on the data,
\begin{align}\label{ghz:eq:symmetry}
 \begin{gathered}
  U_-=X_DU_+X_D\ ,\\
  X_D\ket+^{\otimes n}=\ket+^{\otimes n}\ ,\\
  \bra{x}_XX_D=(-1)^{|x|}\bra{x}_X\ ,
 \end{gathered}
\end{align}
where $|x|=\sum_jx_j$. The last sign comes from
$X\ket+=\ket+$ and $X\ket-=-\ket-$, and disappears upon taking
the squared modulus. Writing $A_\pm(x)=\bra{x}_XU_\pm\ket+^{\otimes n}$,
we obtain, with $t$ denoting the reference outcomes and conditioned
on the GHZ-preparation checks,
\begin{align}\label{ghz:eq:reference-probability}
 \begin{aligned}
  p_D(x)&=\sum_t p_{DR}(x,t)\\
  &=\tfrac12|A_+(x)|^2+\tfrac12|A_-(x)|^2\\
  &=|A_+(x)|^2=p_U(x)\ .
 \end{aligned}
\end{align}
This equality holds jointly for all data outcomes, so it also
preserves any subsequent postselection on them.

\subsection{Lay the GHZ reference alongside the brickwork}
\label{ghz:sec:lattice}

\subsubsection{Data rows and a reference snake}
Keep the planar data layout just described. Put a reference row
below each data row and connect successive reference rows alternately
at the right and left boundary, outside the data region. This gives
one serpentine reference chain. Short connections from the data
vertices to their reference row supply the $Z$-rotations.
The horizontal data chains do not cross anything. A vertical CZ bond
between data rows $j$ and $j+1$ crosses only reference row $j$.
These are the only crossings to resolve. Figure~\ref{ghz:fig:brickwork}
shows the resulting layout.

\begin{figure*}[ht]
\centering
\begin{tikzpicture}[x=1.35cm,y=0.65cm,font=\small]
 \definecolor{data}{RGB}{33,102,172}
 \definecolor{reference}{RGB}{213,94,0}
 \draw[reference,very thick]
 (-0.6,-0.8)--(6.6,-0.8)--(6.6,-2.8)--(-0.6,-2.8)
 --(-0.6,-4.8)--(6.6,-4.8)--(6.6,-6.8)--(-0.6,-6.8);
 \foreach \j in {0,1,2,3} {
   \pgfmathtruncatemacro{\wire}{\j+1}
   \draw[data,thick] (0,{-2*\j})--(6,{-2*\j});
   \node[left] at (-0.85,{-2*\j}) {wire $\wire$};
   \foreach \x in {0,1,2,3,4,5,6} {
     \draw[gray] (\x,{-2*\j})--({\x+0.16},{-2*\j-0.8});
     \fill[data] (\x,{-2*\j}) circle(1.9pt);
     \fill[reference] ({\x+0.16},{-2*\j-0.8}) circle(1.6pt);
   }
 }
 \foreach \x/\j in {0/0,0/2,2/1,4/0,4/2,6/1} {
   \draw[data,thick] (\x,{-2*\j})--(\x,{-2*\j-2});
   \draw[dashed] (\x,{-2*\j-0.8}) circle(0.11cm);
 }
 \draw[-{Latex}] (0,0.85)--(6,0.85);
 \node[above] at (3,0.85) {original time / horizontal position};
 \node[data] at (1.2,-7.6) {data chains and CZ bonds};
 \node[reference] at (4.6,-7.6) {one GHZ reference chain};
\end{tikzpicture}
\caption{Schematic of the construction. The brickwork data graph is planar. A reference snake visits
all data rows; grey connections supply the one-body rotations.
Circles mark the crossings of vertical CZ bonds with reference rows.
The fixed tile in Fig.~\ref{ghz:fig:crossings} resolves each crossing.
Intermediate qubits and parity-check auxiliaries are suppressed;
Fig.~\ref{ghz:fig:king-subgraph} shows a fully expanded king-lattice subgraph.}
\label{ghz:fig:brickwork}
\end{figure*}

Here is an explicit placement with room for the local gadgets.
Number the wires $j=0,\ldots,q-1$ and the circuit layers
$t=0,\ldots,T-1$. Put the layer boundaries at $(60t,60j)$.
In each horizontal interval of length $60$, the two-Hadamard case
has one intermediate data vertex at offset $30$; the three-Hadamard
case has two at offsets $20$ and $40$. The CZ bonds are vertical
segments at layer boundaries. These segments are disjoint within
each brickwork layer.

Put reference row $j$ at height $60j+20$ and extend it from
$x=-20$ to $x=60T+20$. Join the rows at alternating ends to form
the snake. A data vertex $v=(x,y)$ has its reference port at
$(x+6,y+20)$. Connect them by six diagonal steps
$(x+i,y+i)$, $i=0,\ldots,6$, followed by vertical steps to the port.
This small rightward offset separates the field connection from a
possible vertical CZ bond at column $x$.

\subsubsection{One crossing tile}
At a crossing, translate coordinates so that the surrounding square
is $\{0,\ldots,4\}^2$. Replace the horizontal reference segment and
the vertical data segment by the following two paths:
\begin{align*}
 \text{reference: }&(0,2),(1,2),(2,1),(3,2),(4,2)\ ,\\
 \text{data: }&(2,0),(1,1),(2,2),(2,3),(2,4)\ .
\end{align*}
They keep their endpoints and lengths, and share no lattice site.
The two diagonal edges cross between sites, so there is no qubit
or interaction at their geometric intersection. The spacing above
keeps the crossing tiles disjoint from each other, the data
vertices and the field connections.

\begin{figure*}[ht]
\centering
\begin{minipage}[t]{0.52\textwidth}
\centering
\subfloat[{A vertical data bond passes through a reference row without sharing a site.}]{\begin{minipage}{\linewidth}
\centering
\begin{tikzpicture}[x=0.7cm,y=0.7cm,font=\small]
 \draw[gray!30,step=1] (0,0) grid (4,4);
 \draw[orange!90!black,very thick] (0,2)--(1,2)--(2,1)--(3,2)--(4,2);
 \draw[blue!75!black,very thick] (2,0)--(1,1)--(2,2)--(2,3)--(2,4);
 \foreach \p in {(0,2),(1,2),(2,1),(3,2),(4,2)}
   {\fill[orange!90!black] \p circle(2.2pt);}
 \foreach \p in {(2,0),(1,1),(2,2),(2,3),(2,4)}
   {\fill[blue!75!black] \p circle(2.2pt);}
 \node[orange!90!black,left] at (0,2) {reference};
 \node[blue!75!black,above] at (2,4) {data bond};
\end{tikzpicture}
\end{minipage}}
\end{minipage}\hfill
\begin{minipage}[t]{0.38\textwidth}
\centering
\subfloat[{A detour adds one edge while preserving the path's endpoints.}]{\begin{minipage}{\linewidth}
\centering
\begin{tikzpicture}[x=0.7cm,y=0.7cm]
 \draw (0,3)--(0,1);
 \foreach \y in {1,2,3} {\fill (0,\y) circle(2pt);}
 \draw[-{Latex}] (1,2)--(2.3,2);
 \draw (3.3,3)--(4.3,2)--(3.3,2)--(3.3,1);
 \foreach \p in {(3.3,3),(4.3,2),(3.3,2),(3.3,1)} {\fill \p circle(2pt);}
\end{tikzpicture}
\end{minipage}}
\end{minipage}
\caption{Crossing and path-length adjustments in the king-lattice embedding.
Only the drawn edges are assigned nonzero couplings.}
\label{ghz:fig:crossings}
\end{figure*}

After these replacements, put reference qubits and check auxiliaries
alternately along the snake, starting with a reference qubit at its
left endpoint on the first row. Each prescribed port is a reference
qubit: all port positions have even distance along the original
snake, and the crossing tiles preserve these distances. Couple each
check auxiliary to its two reference neighbors with angles
$+\pi/4$ and $-\pi/4$, and select it to $+$.
Eq.~\eqref{ghz:eq:ghz-prep} then prepares one GHZ state along the
entire snake using local interactions.

\subsubsection{Realizing the data and field connections}
A drawn connection must implement its original $ZZ$ gate, including
when it passes through a crossing tile. The same parity check does
this on any odd-length path.
\begin{appendixlemma}
  \label{ghz:lem:transport}
On consecutive sites $u,a_1,c_1,\ldots,a_k,c_k,v$, put $c_0=u$ and set
\begin{align}\label{ghz:eq:path-gates}
 T_{uv}(\theta)=\left[\prod_{j=1}^k
 G_{c_{j-1}a_j}(\pi/4)G_{a_jc_j}(-\pi/4)\right]G_{c_kv}(\theta)\ .
\end{align}
Prepare and select all internal qubits in $\ket+$. Then
\begin{align}\label{ghz:eq:transport}
 {}_{a,c}\bra+^{\otimes 2k}T_{uv}(\theta)
 \ket+^{\otimes 2k}_{a,c}=2^{-k}G_{uv}(\theta)\ .
\end{align}
\end{appendixlemma}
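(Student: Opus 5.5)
The plan is induction on $k$, with the case $k=1$ as the only real computation; it is essentially the basic equality check, Eq.~\eqref{ghz:eq:parity}, iterated along the path. All gates are diagonal and commute, so I will work in the computational basis of the endpoints $u,v$ and of the internal qubits, and evaluate the partial inner product over the internal qubits one pair $(a_j,c_j)$ at a time.

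For $k=1$ we have the sites $u,a_1,c_1,v$ and the operator $G_{ua_1}(\pi/4)G_{a_1c_1}(-\pi/4)G_{c_1v}(\theta)$. I first contract $a_1$, the only gates touching it being $G_{ua_1}(\pi/4)G_{a_1c_1}(-\pi/4)$. By Eq.~\eqref{ghz:eq:parity},
\[
{}_{a_1}\bra+G_{ua_1}(\pi/4)G_{a_1c_1}(-\pi/4)\ket+_{a_1}=P^+_{uc_1}=\tfrac12(I+Z_uZ_{c_1}).
\]
This operator is diagonal and restricts $c_1$ to carry the same computational-basis bit as $u$. I then contract $c_1$ against $G_{c_1v}(\theta)$. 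Because $P^+_{uc_1}$ enforces $Z_{c_1}=Z_u$ and $G_{c_1v}(\theta)$ depends on $c_1$ only through $Z_{c_1}$, we get $P^+_{uc_1}G_{c_1v}(\theta)=P^+_{uc_1}G_{uv}(\theta)$. What remains is
\[
{}_{c_1}\bra+P^+_{uc_1}\ket+_{c_1}=\tfrac12\bigl(I+Z_u\bra+Z\ket+\bigr)=\tfrac12 I,
\]
so the full contraction gives $\tfrac12\,G_{uv}(\theta)$, which is the claim for $k=1$.

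For general $k$, I write $T_{uv}(\theta)$ as the gates on the prefix $u,a_1,c_1$ times $T_{c_1v}(\theta)$ on the remaining $2(k-1)$ internal sites. The induction hypothesis, applied to the path starting at $c_1$ (the selected internal qubits of the suffix are disjoint from $u,a_1,c_1$), replaces the suffix contraction by $2^{-(k-1)}G_{c_1v}(\theta)$. The $k=1$ step then contracts $a_1,c_1$ and gives the further factor $\tfrac12$, together with $G_{uv}(\theta)$. Since all operators involved are diagonal, the order of the partial contractions does not matter.

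The only point needing care is justifying the substitution $Z_{c_1}\to Z_u$ inside a product before $c_1$ is traced out. I will handle it by expanding in the computational basis of $u,c_1,v$: each term is a scalar, and the projector kills every term with $c_1\neq u$. Everything else is bookkeeping.
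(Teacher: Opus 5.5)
Your proof is correct and follows the paper's argument. Selecting each $a_j$ gives the parity projector $P^+_{c_{j-1}c_j}$ of Eq.~\eqref{ghz:eq:parity}, which forces every $c_j$ to carry the bit of $u$, so the final gate acts as $G_{uv}(\theta)$, and each contraction of a $c_j$ contributes a factor $\tfrac12$; your induction on $k$ only organizes this pair by pair, whereas the paper imposes all the projectors at once.
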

\emph{Proof.}
Selecting the $a_j$ imposes $P^+_{c_{j-1}c_j}$. Every $c_j$ then has
the same computational-basis bit as $u$, so the last gate has the
phase of $G_{uv}(\theta)$. Each selected copy $c_j$ contributes
$\langle+|b\rangle\langle b|+\rangle=1/2$. The scalar is independent
of the endpoint state, so the identity also holds for entangled inputs.
\hfill$\square$

All data and field paths in our placement have even length before
adjustment. Add one unused site by replacing a unit edge with a
two-edge detour, as in Fig.~\ref{ghz:fig:crossings}. Explicitly, for a
horizontal data path beginning at $(x,y)$ use
\[
 (x+8,y)\longrightarrow(x+9,y-1)\longrightarrow(x+9,y)\ .
\]
For a vertical CZ path beginning at its lower endpoint $(x,y)$ use
\[
 (x,y+10)\longrightarrow(x-1,y+11)\longrightarrow(x,y+11)\ .
\]
For the field path from $(x,y)$ use
\[
 (x+6,y+10)\longrightarrow(x+7,y+11)\longrightarrow(x+6,y+11)\ .
\]
These detours avoid all other paths and crossing tiles. Apply
Lemma~\ref{ghz:lem:transport} to each resulting odd path, with
$\theta=-\pi/4$ for a data bond and $\theta=\alpha_v$ for a field
connection. Orient each field path from the reference port towards
the data vertex: its equality checks bring the reference bit next
to the data, where the final edge supplies the desired rotation.

Each data vertex has at most two horizontal connections, one
vertical CZ connection and one reference connection, hence degree
at most four. Reference ports have degree at most three, and all
other sites have degree at most two. Every gate angle is a multiple
of $\pi/8$, reduced modulo $2\pi$; in particular, a check angle
$-\pi/4$ equals $14\pi/8$ modulo $2\pi$. The rectangle has area
$O(q(T+1))$. Its remaining sites may contain uncoupled $\ket+$ qubits,
whose $X$ outcomes are deterministically zero.

\subsubsection{The resulting IQP circuit}
Every path gadget contributes a nonzero scalar independent of the
retained outcomes, while the GHZ reference preserves the data
output distribution. If $Q$ is the final circuit and $E$ is the event
that all prescribed gadget outcomes are zero, then
\begin{align}\label{ghz:eq:simulation}
 \Pr_Q(E)>0,\qquad
 \Pr_Q(z\mid E)=\left|\bra z_X C\ket+^{\otimes q}\right|^2\ .
\end{align}
Here $z$ lists the original circuit's output bits and the probability
on the left is their marginal distribution. Any postselections
belonging to $C$ can be imposed on the corresponding output bits.

All fresh qubits can be prepared at the start, and all gadget
postselections deferred to the end. Thus the king-lattice IQP circuit is
exactly Eq.~\eqref{ghz:eq:model}: prepare $\ket+$ at every site, apply
one $e^{i\mathcal H_J}$ containing only local two-body $ZZ$ terms,
and measure every qubit in $X$. The construction has polynomial
size, with no routing of a general interaction graph.

\begin{figure*}[p]
\centering
\begin{minipage}[t]{\textwidth}
\centering
\subfloat[{Complete expanded graph; the dashed box is enlarged in (b).}]{\begin{minipage}{\linewidth}
\centering
\includegraphics[width=\linewidth]{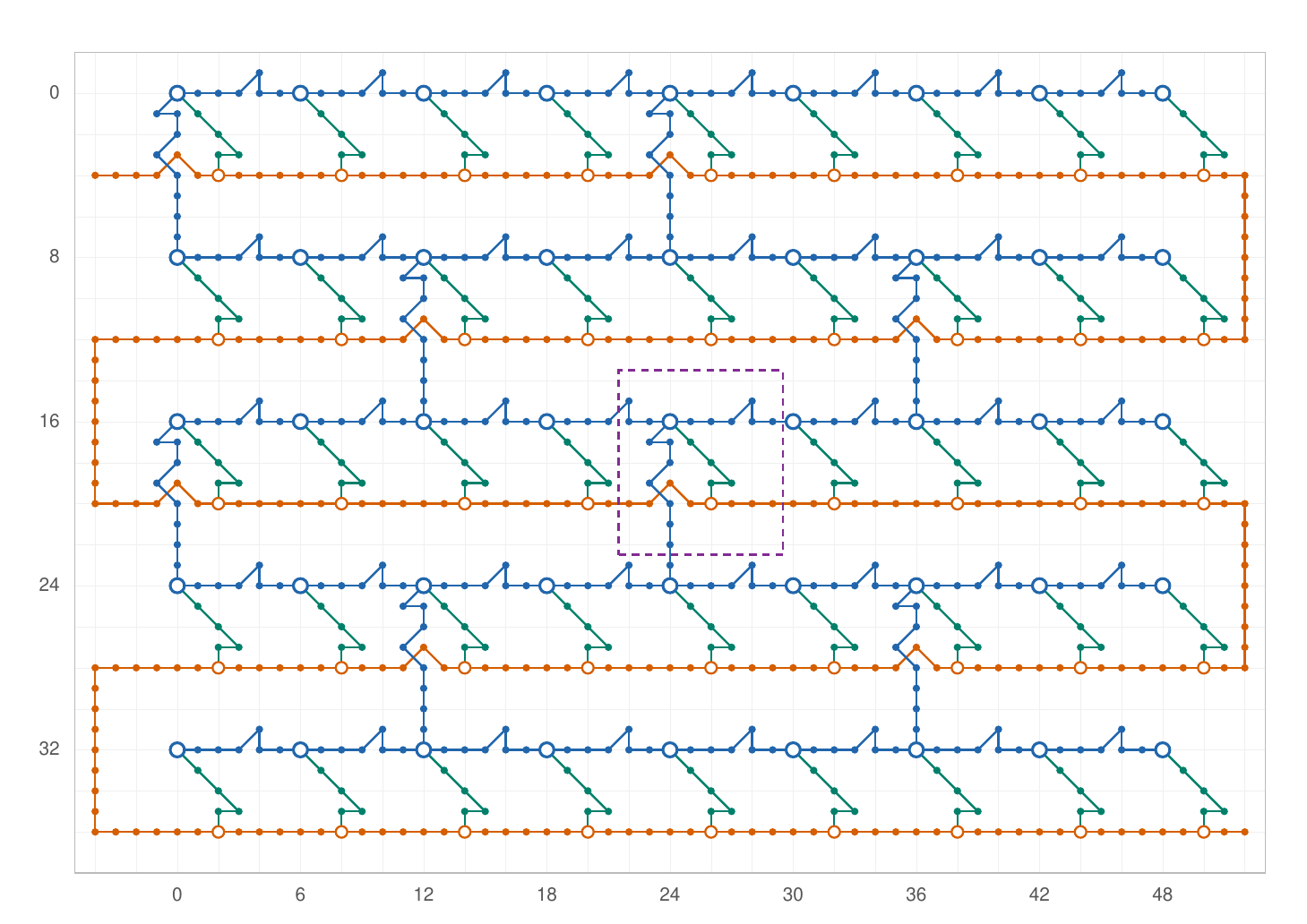}
\end{minipage}}
\end{minipage}
\par\medskip
\begin{minipage}[t]{0.85\textwidth}
\centering
\subfloat[{Bulk data site $(24,16)$ and reference port $(26,20)$. The data site has degree four; the crossing near $(24,20)$ has no shared vertex.}]{\begin{minipage}{\linewidth}
\centering
\includegraphics[height=5cm]{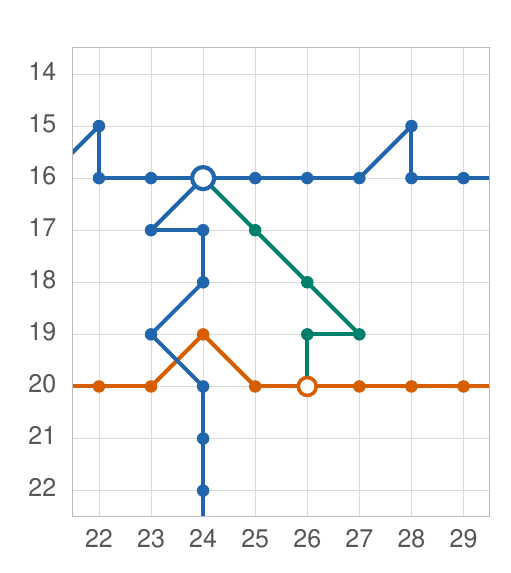}
\end{minipage}}
\end{minipage}
\caption{An actual king-lattice subgraph, including every auxiliary qubit, for five data rows and four staggered diagonal brickwork layers, each followed by the $HH$ padding. Every dot is a physical qubit; every colored segment is one king move (an edge of $\ell_\infty$ length one). Large blue rings mark original data vertices, and orange rings mark reference ports. Blue segments are data-interaction paths, orange segments form the GHZ reference/check chain, and green segments supply the $Z$-rotations. There are 842 active sites and 889 active couplings, with maximum degree four. Faint grid lines are coordinate guides; unused king-graph edges have zero coupling. This compact placement uses data-row spacing eight, horizontal data-vertex spacing six, and the same crossing and path-parity gadgets as the proof.}
\label{ghz:fig:king-subgraph}
\end{figure*}

\subsection{Physical embedding and multiplicative error}

In the geometric face coordinates of Sec.~\ref{sec:hardness} and the king coordinates of Eq.~\eqref{eq:kingcoords}, the rotated stabilizer domain is
\begin{align}
 \begin{split}
 \mathcal D_d=\{(a,b)\in\mathbb Z^2:{}&0\leq a+b\leq d-2\ ,\\
 &-1\leq a-b\leq d-1\}\ ,
 \end{split}
 \label{eq:kingdomain}
\end{align}
since $a+b$ is the geometric $x$ coordinate and $a-b$ the geometric $y$ coordinate. Every king edge with both endpoints in $\mathcal D_d$ is the syndrome support of a physical single-qubit or $ZZ$ term, as shown in Appendix~\ref{app:syndromeiqp}. As a check, we enumerated both sets for $d=5,7,9,11,13$; the numbers of king edges of $\mathcal D_d$ and of realized supports agree at $29$, $69$, $125$, $197$ and $285$.

The domain $\mathcal D_d$ contains arbitrarily large rectangular subsets as $d$ increases, so the finite rectangular layout constructed above can be translated into the bulk of a sufficiently large surface code. Its side lengths are polynomial in $q$ and $T$, and hence the required surface-code instance is polynomial in the size of the source circuit.

For each active king edge in this translated layout, choose one physical preimage and assign its single-qubit or $ZZ$-rotation angle the corresponding coupling $J_{uv}$; set all other physical angles to zero. The retained data outcomes, gadget postselections and GHZ-reference outcomes are then outcomes of physical syndrome-register qubits, while every unused register qubit is deterministically zero by Eq.~\eqref{eq:iqpinactivesites}. Marginalizing the GHZ-reference outcomes preserves the pointwise $\epsilon$-multiplicative bound, whereas conditioning on the gadget and source-circuit postselections changes conditional probabilities only by the factor $R_\epsilon=(1+\epsilon)/(1-\epsilon)$ used below.

\subsection{Multiplicative-error sampling consequence}\label{ghz:sec:complexity}
The starting circuits give postselected universal computation by
the construction of Ref.~\cite{BremnerJozsaShepherd2011}. Our exact polynomial-overhead
reduction preserves their conditional distributions. Conversely,
$\mathcal K$ uses ordinary quantum gates. Aaronson's postselection
theorem~\cite{Aaronson2005} therefore gives
\[
 \post\mathcal K=\post\BQP=\PP\ .
\]

For completeness, the postselection argument extends the exact-sampling consequence of the gate construction to every fixed $\epsilon<1$ in Eq.~\eqref{eq:mult}. Put $R_\epsilon=(1+\epsilon)/(1-\epsilon)$. Summing the pointwise bounds over events gives, for $p(P)>0$,
\begin{align}
 \widetilde p(E\mid P)\leq R_\epsilon p(E\mid P)\ ,
 \qquad\widetilde p(P)>0\ .
 \label{eq:conditionalerror}
\end{align}
Amplify a postselected quantum decision computation so that its conditional error is $\eta<1/(3R_\epsilon)$. Independent repetition, conditioning on all original selections, and a majority decision achieve this with a constant number of copies for fixed $\epsilon$. Compile the amplified circuit before applying the hypothetical sampler. On both yes and no instances, use Eq.~\eqref{eq:conditionalerror} for the event of an incorrect decision. The classical sampler conditioned on $P$ then has error below $1/3$. This is a postselected classical computation, not ordinary rejection sampling in polynomial time.

It follows that $\PP\subseteq\post\BPP$. Approximate counting gives $\post\BPP\subseteq\BPP^{\NP}$~\cite{Stockmeyer1983,AroraBarak2009}. Combining this containment with Toda's theorem and the relativized randomized-time containment yields~\cite{Toda1991,AroraBarak2009}
\begin{align}
 \begin{split}
 \PH&\subseteq\mathsf P^{\PP}
 \subseteq\mathsf P^{\post\BPP}
 \subseteq\mathsf P^{\BPP^{\NP}}\\
 &=\BPP^{\NP}\subseteq\Sigma_3^p\cap\Pi_3^p\ .
 \end{split}
\end{align}
Polynomially many oracle calls can be amplified to justify the equality. This proves Eq.~\eqref{eq:collapse} and hence Theorem~\ref{thm:hardness}. Exact sampling is included by setting $\epsilon=0$; constant total-variation error does not supply Eq.~\eqref{eq:conditionalerror} for rare postselection events.

\end
{document}